\newcommand{\IN}{\mathbb{N}}		
\newcommand{\IR}{\mathbb{R}}		
\newcommand{\IS}{\mathbb{S}}		
\newcommand{\ball}{\mathcal{B}}		
\newcommand{\cC}{\mathcal{C}}       
\newcommand{\cS}{\mathcal{S}}       
\renewcommand{\epsilon}{\varepsilon}				
\renewcommand{\theta}{\vartheta}                  	
\renewcommand{\phi}{\varphi}						
\newcommand{\set}[1]{\left\{ #1 \right\}}				
\newcommand{\abs}[1]{\left\lvert #1 \right\rvert}		
\newcommand{\norm}[1]{\left\lVert #1 \right\rVert}		
\newcommand{\scalar}[2]{\left\langle #1 , #2 \right\rangle} 
\newcommand{\onot}[1]{\mathcal{O}{\left( #1 \right)}}	
\newcommand{\signatur}[3]{#1:#2\,\rightarrow\,#3}     
\newcommand{\kugel}{r,\theta,\phi}		
\newcommand{\doppeq}{\mathrel{\mathop:}=}		
\newcommand{\bxi}{ {\boldsymbol\xi} }
\newcommand{\bgam}{ {\boldsymbol\gamma} }
\newcommand{\brho}{ {\boldsymbol\rho} }
\newcommand{\DF}{\textup{DF}}
\newcommand{\SF}{\textup{SF}}
\newcommand{\FF}{\textup{FF}}
\renewcommand{\d}{\, \textup{d}}
\DeclareSIUnit{\mup}{\text{$\mu_0$}}
\theoremstyle{plain} 
\newtheorem{theorem}{Theorem}[section]
\newtheorem{lemma}[theorem]{Lemma}
\newtheorem{proposition}[theorem]{Proposition} 
\newtheorem{corollary}[theorem]{Corollary}
\theoremstyle{definition} 
\newtheorem{definition}[theorem]{Definition}
\newtheorem*{remark}{Remark}
\newtheorem*{note}{Note}
\theoremstyle{remark} 
\pgfplotsset{compat=1.8}
\pgfplotsset{
    unit markings=slash space,
    /pgfplots/xbar/.style={
    /pgf/bar shift={-0.5*(\numplotsofactualtype*\pgfplotbarwidth + (\numplotsofactualtype-1)*#1) + (.5+\plotnumofactualtype)*\pgfplotbarwidth + \plotnumofactualtype*#1},
    },
}
\pgfplotsset{select coords between index/.style 2 args={
    x filter/.code={
        \ifnum\coordindex<#1\fi
        \ifnum\coordindex>#2\fi
    }
}}
\definecolor{ibilight}{RGB}{193,216,237}
\definecolor{ibidark}{RGB}{0,73,146}	
\definecolor{uke2}{RGB}{170,156,143} 	
\definecolor{uke3}{RGB}{87,87,86}		
\definecolor{ukesec1}{RGB}{255,223,0}	
\definecolor{ukesec2}{RGB}{239,123,5}	
\definecolor{ukesec3}{RGB}{104,195,205}	
\definecolor{ukesec4}{RGB}{138,189,36}	
\definecolor{ukesec5}{RGB}{178,34,41}	
\definecolor{tuhh}{RGB}{45,198,214}     
\definecolor{ibidarkBG}{RGB}{227,229,242}   
\definecolor{uke2BG}{RGB}{233,228,225} 	    
\definecolor{uke3BG}{RGB}{230,231,232}	    
\definecolor{ukesec1BG}{RGB}{255,243,190}   
\definecolor{ukesec2BG}{RGB}{254,232,212}   
\definecolor{ukesec3BG}{RGB}{222,241,241}   
\definecolor{ukesec4BG}{RGB}{233,243,222}   
\definecolor{ukesec5BG}{RGB}{244,230,225}   
\title{\large\bfseries Unique Compact Representation of Magnetic Fields using Truncated Solid Harmonic Expansions}
\date{\small \today}
\author[1,2]{Marija Boberg}
\author[1,2]{Tobias Knopp}
\author[1,2]{Martin M\"oddel}
\affil[1]{Section for Biomedical Imaging, University Medical Center Hamburg-Eppendorf, Hamburg, Germany}
\affil[2]{Institute for Biomedical Imaging, Hamburg University of Technology, Hamburg, Germany}
\begin{document}

  \maketitle

  \begin{abstract}
    Precise knowledge of magnetic fields is crucial in many medical imaging applications like \acl{MRI} or \ac{MPI} as they are the foundation of these imaging systems. For the investigation of the influence of field imperfections on imaging, a compact and unique representation of the magnetic fields using real solid spherical harmonics, which can be obtained by measuring a few points of the magnetic field only, is of great assistance. In this manuscript, we review real solid harmonic expansions as a general solution of Laplace's equation including an efficient calculation of their coefficients using spherical t-designs. We also provide a method to shift the reference point of an expansion by calculating the coefficients of the shifted expansion from the initial ones. These methods are used to obtain the magnetic fields of an \ac{MPI} system. Here, the \acl{FFP} of the spatial encoding field serves as unique expansion point. Lastly, we quantify the severity of the distortions of the static and dynamic fields in \ac{MPI} by analyzing the expansion coefficients.
  \end{abstract}
  
\paragraph{\noindent{\it Keywords\/}:} Spherical Harmonics, Solid Harmonic Expansions, Magnetic Particle Imaging (MPI), Magnetic Fields
 
\acresetall

\section{Introduction}
    Magnetic fields have been instrumental in the advancement of technology since the invention of the compass. They are fundamental for electric generators or motors, transformers, and magnetic storage devices. In the field of medical applications, magnetic fields are the basis of various imaging systems like \ac{MRI} or \ac{MPI}. The precise generation of the magnetic fields has a significant impact on image quality, as even small deviations can lead to image artifacts and misdiagnoses. If the magnetic fields are known, the negative influence of these imperfections can be corrected in most of these applications, like field-related artifacts in MR images~\cite{Janke2004}. A standard method for magnetic field representation is a spherical harmonic expansion, which can be obtained via a calibration measurement of the magnetic field at several positions on a spherical surface~\cite{ODonnell1987}. This offers a robust and compact representation of the distribution of said fields within a spherical region and allows analysis and solution of related problems~\cite{Eccles1993}.
    
    As with \ac{MRI}, the fundamental building blocks of the recent imaging modality \ac{MPI} are magnetic fields. Static magnetic fields spatially encode the \ac{MPI} signal while dynamic magnetic fields are used for signal generation~\cite{Rahmer2009}. \ac{MPI} scanners are characterized by the topology of their static signal encoding field, which is either a \ac{FFP} or a \ac{FFL}~\cite{Knopp2015a}. Many reconstruction methods in \ac{MPI} require some assumptions or knowledge about the magnetic fields. In x-space reconstruction the position of the \ac{FFP} is required to grid the measured data to the spatial domain in one step of the reconstruction~\cite{Goodwill2010}. During the fast implicit multi-patch reconstruction, which is used to increase the \ac{FOV}, the center positions of the different patches must be known to avoid artifacts~\cite{Szwargulski2018efficient}.
    
    While \ac{MPI} is used as application example in this paper, spherical harmonic expansions can be applied in various fields. As they provide a compact representation of magnetic fields, they are used in \ac{MRI} to effectively design active or passive shimming~\cite{Noguchi2014}, to determine the magnetic coupling between two electromagnetic sources~\cite{Van2014} or to model the earth's lithospheric magnetic field~\cite{Maus2006,Thebault2021}. Furthermore, spherical harmonics can be used for registration of objects by determination of the object's orientation~\cite{Burel1995} or for simulation of high-resolution, full-sky maps of the cosmic microwave background anisotropies~\cite{Muciaccia1997}.
    
    It was already shown by Bringout et al.~\cite{Bringout2014,Bringout2020} and Weber et al.~\cite{Weber2016,Weber2017} that the coefficients of a spherical harmonic expansion are suitable for the representation of magnetic fields in \ac{MPI}. In this paper, we give a review of \textit{real solid spherical harmonic expansions} as a general solution of Laplace's equation, which we will later apply to the magnetic fields in \ac{MPI}. For the calculation of the coefficients of this expansion, we use \textit{spherical t-designs} as efficient quadrature nodes. An actual measurement at these nodes provides the coefficients of the effective magnetic fields. The coefficients can be used directly to analyze the spatial characteristics of the magnetic fields at the point of the expansion. To exploit this, we present a \textit{method to shift the reference point of the expansion}. This offers the possibility to obtain the spatial characteristics of the magnetic fields at different positions from one set of coefficients calculated in a measurement based procedure. Finally, we use the coefficients at different expansion points for the \textit{characterization of static and dynamic fields in \ac{MPI}}.
    
    \subsection{Problem Statement}
    Magnetic particle imaging is a tracer based imaging modality, which determines the spatial distribution of \ac{SPIOs} using magnetic fields for signal generation and encoding. The signal encoding field is a static linear field, called selection field, and in the case considered here it has an \ac{FFP} topology. Only nanoparticles that are located inside a small \ac{LFR} around the \ac{FFP} are unsaturated and able to non-linearly respond to an excitation field, which leads to the spatial encoding of the signal. In our scenario, three orthogonal excitation directions with sinusoidal excitation and rational frequency ratios are chosen, such that the \ac{FFP} moves on a Lissajous trajectory running through a cuboidal \ac{FOV}. A detailed mathematical model of the MPI receive signal is described in~\cite{Kluth2018}. Here, we consider the signal under the assumptions of an ideal analog filter and no feed through. In this case the magnetization response of the nanoparticles in the \ac{LFR} induces the voltage signal into multiple receive coils $k=1,\dots,K$ given by
    \begin{align}\label{eq:MPISignalEquation}
        u_k(t) = 
        -\mu_0\int_{\Omega} \scalar{\bm p^k_\textup{r}(\bm r)}{\frac{\partial\bar{\bm m}}{\partial t}\!\left(\bm B(\bm r,t),t\right)}c(\bm r) \d\bm r,
    \end{align}
    where $\signatur{c}{\Omega}{\IR_{\geq 0}}$ describes the particle distribution inside the \ac{FOV} $\Omega\subseteq\IR^3$, $\signatur{\bar{\bm m}}{\IR^3\times\IR}{\IR^3}$ is the mean magnetic moment of the particles, $\signatur{\bm p^k_\textup{r}}{\IR^3}{\IR^3}$ is the coil sensitivity of the receive coils, and $\mu_0$ is the vacuum permeability. The magnetic moment of the particles is the response to the applied magnetic field $\signatur{\bm B}{\IR^3\times\IR}{\IR^3}$, which is composed of the static selection and dynamic drive fields.
    
    An exemplary \ac{MPI} experiment is shown in Fig.~\ref{fig:MPISetup}. A mouse is placed in the center of the scanner bore. During the measurement a tracer with \ac{SPIOs} is injected. In a measurement scenario typically selection fields with gradient strengths between \SIlist{0.2; 7}{\tesla\per\meter}~\cite{Graser2019human,Konkle2015convex} in $z$-direction with half of this value in $x$- and $y$-direction as well as drive-field amplitudes of about \SIrange{6}{18}{\milli\tesla}~\cite{Graser2019human,Weizenecker2009} are used. The drive-field amplitude is limited since higher amplitudes can cause peripheral nerve stimulation~\cite{Saritas2013}. Higher gradient strengths lead to a smaller signal generating \ac{LFR} such that the resolution of the imaging system increases~\cite{Rahmer2009}. However, this comes at the cost of a reduced size of the \ac{FOV}. E.g. using a gradient strength of \SI{2.0}{\tesla\per\m} with a drive-field amplitude of \SI{12}{\milli\tesla} yields a \ac{FOV} of \SI{24 x 24 x 12}{\milli\meter}, which does not cover larger objects like mice or rats. To this end a multi-patch approach is used~\cite{Knopp2015b}. Additional static magnetic fields, named focus fields, shift the initial \ac{FFP} such that different patches cover a larger \ac{FOV}. In Fig.~\ref{fig:MPISetup} an experiment is sketched, where a set of nine different patches is used to cover the mouse. 
    
    \begin{figure}
        \centering
        \includegraphics[width=0.8\textwidth]{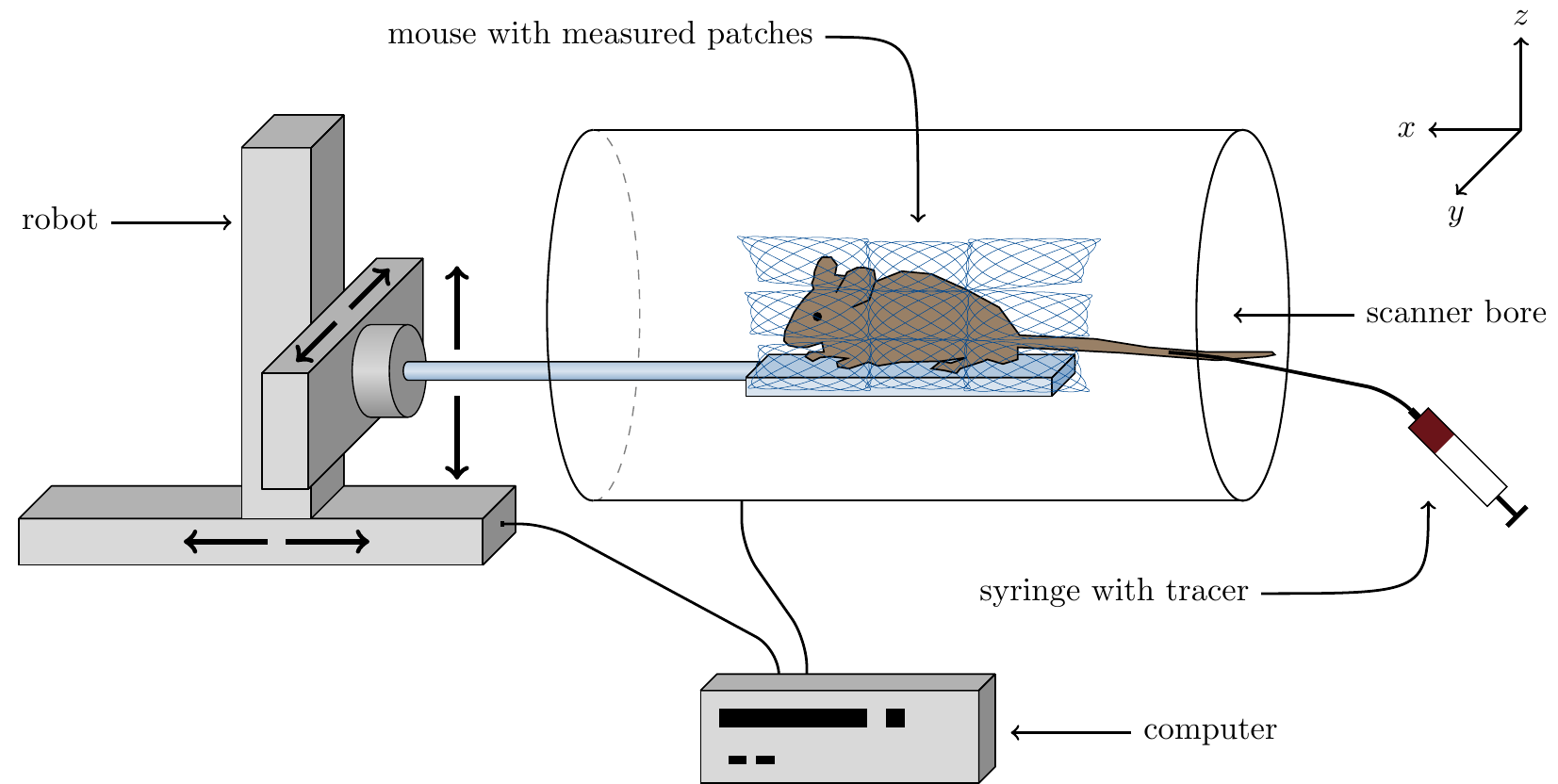}
        \caption{An MPI measurement is illustrated schematically. MPI scanner and a three-axis robot are controlled by a single computer. Prior to the measurement a mouse is placed in the center of the scanner bore using the robot. During the MPI measurement tracer material containing \ac{SPIOs} is injected into the mouse. As the size of the mouse exceeds the size of a single-patch \ac{FOV} multiple patches are used to cover its body. Off-center patches are warped due to the spatial characteristics of the static and dynamic fields.}
        \label{fig:MPISetup}
    \end{figure}
    
    Due to field imperfections the trajectory of each patch is slightly different, which might have multiple negative consequences. If the \ac{FFP} is not moving along the expected path the spatial encoding changes, which may lead to image artifacts. Moreover, patches might shift to different positions, which may lead to gaps the sampled \ac{FOV} like it is illustrated in Fig.~\ref{fig:MPISetup}. Lastly, different or spatially dependent drive-field amplitudes can result in incorrect estimations of the tracer concentration.
    
    One main goal of this work is to quantify the severity of the distortions of the underlying magnetic fields. Ideally, only constant and linear fields are present in \ac{MPI}, which, when represented by the coefficients of a spherical harmonic expansion at the \ac{FFP} of the selection field, leads to only a few non-zero coefficients as shown in Fig.~\ref{fig:IdealCoeffs}. Local imperfections are directly observable in non-zero coefficients of orthogonal field components or coefficients of higher order. For the calculation of the coefficients, we use field measurements at spherical t-design quadrature nodes located on a sphere. The corresponding expansion has the center of the sphere as its expansion point. However, since the exact position of the \ac{FFP} is not yet known at the time of measurement, this is likely not the \ac{FFP}. Additional measurement with the \ac{FFP} as center can be avoided by shifting the reference point of the expansion by a linear transformation of the coefficients. This method can also be used to compare the local fields at the centers of the patch positions, which ideally should be identical. This is done by moving the reference point of extension to each center, respectively.
    
    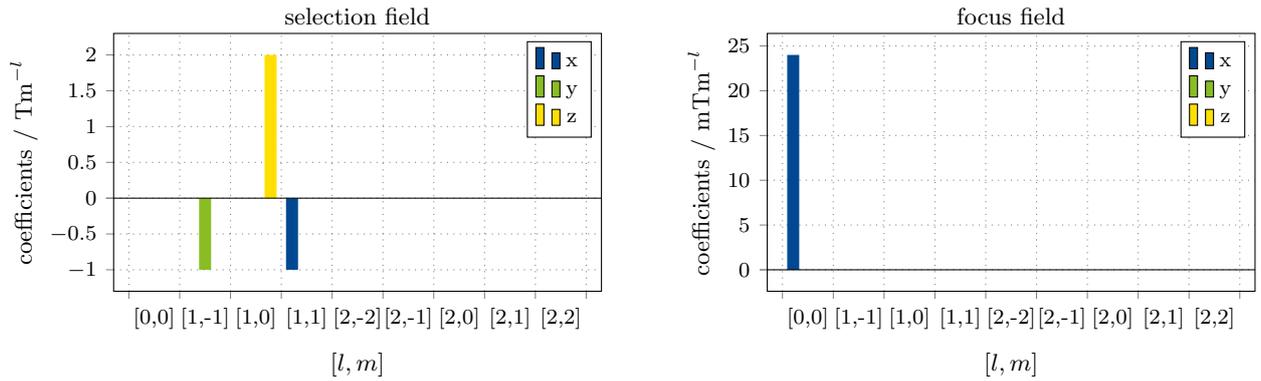
\begin{figure}
        \centering
        \pgfplotstableread[col sep=comma,]{tikz/Coeffs/data/Coeffs_IdealSF.csv}\datatable
\pgfplotstableread[col sep=semicolon,]{tikz/Coeffs/data/Labels_xticks.csv}\labelstable

\begin{tikzpicture}

\pgfplotsset{set layers=standard,
    footnotesize,
    width=8cm,
    height=5cm,
    grid=major,
    major grid style={white},
    change y base = true,
    xlabel={$[l,m]$},
    ylabel={coefficients},
    compat=1.3,
    xlabel shift={2pt},
    ylabel shift={-8pt},
    label style={font=\small},
    tick label style={/pgf/number format/fixed,}, 
    tick style={major grid style={thin,dotted,gray}},
    tick align=outside,
    tickpos=left,
    xticklabels from table = {\labelstable}{lm},
            xtick={0.5,1.5,...,16.5},
            x tick label as interval,
            enlarge x limits=0.1,
    extra y ticks={0.0},
    extra x tick labels={},
    extra y tick labels={},
    extra x tick style={major grid style={thin,gray}},
    extra y tick style={major grid style={solid,black,on layer=axis foreground}},
    legend style = {anchor=north east, at={(0.98,0.97)}},
    legend columns=1,
    select coords between index={0}{8}
    }

\begin{axis}
    [name = SF, xshift=-4.3cm,
    title = selection field,
    y unit = Tm^{-\mathnormal{l}}, 
    ybar=1pt,
    bar width=4.5pt]
    
    \addplot[fill = ibidark,draw=none] table [x = num, y = x] {\datatable}; 
    \addlegendentry{x}
    \addplot[fill=ukesec4,draw=none] table [x = num, y = y] {\datatable}; 
    \addlegendentry{y}
    \addplot[fill=ukesec1,draw=none] table [x = num, y = z] {\datatable}; 
    \addlegendentry{z}
\end{axis}

\pgfplotstableread[col sep=comma,]{tikz/Coeffs/data/Coeffs_IdealFF.csv}\datatable
\begin{axis}
    [name = DF, xshift = 4.3cm,
    title = focus field,
    y unit = Tm^{-\mathnormal{l}},
    y SI prefix = milli,
    ybar=1pt,
    bar width=4.5pt]
    
    \addplot[fill = ibidark,draw=none] table [x = num, y = x] {\datatable}; 
    \addlegendentry{x}
    \addplot[fill=ukesec4,draw=none] table [x = num, y = y] {\datatable}; 
    \addlegendentry{y}
    \addplot[fill=ukesec1,draw=none] table [x = num, y = z] {\datatable}; 
    \addlegendentry{z}
\end{axis}
\end{tikzpicture}
        \caption{Spherical harmonic coefficients of two ideal magnetic fields in \ac{MPI}. On the left, an ideal selection field with gradient strength of \SI{2}{\tesla\per\m} in $z$-direction and \SI{-1}{\tesla\per\meter} in $x$- and $y$-direction is shown. The gradient strength is represented by the linear coefficients ($l=1$) of the spherical harmonic expansion of the corresponding field direction. An ideal focus field in $x$-direction with a \SI{24}{\milli\tesla} field strength is visualized on the right. This constant field is represented by the constant coefficient ($l=0$) of the expansion in $x$-direction.}
        \label{fig:IdealCoeffs}
    \end{figure}

\section{Theory}

\subsection{Unique Solution of Laplace's Equation}

In this chapter, we start with the introduction of solid spherical harmonic expansions as general solution of Laplace's equation. In order to solve the equation, we use a Dirichlet boundary condition on a sphere, which is a natural choice for solutions expanded with spherical harmonics.

\begin{definition}
    Let $f\in\mathcal{C}^2(\ball_R(\brho),\IR)$ with $\ball_R(\brho)\doppeq\set{\bm a\in\IR^3 : \norm{\bm a - \brho}_2 \leq R}$, $\brho\in\IR^3$, $R\in\IR_+$. Laplace's equation with Dirichlet boundary condition is given by
    \begin{align}\label{eq:Th_LaplaceDirichlet}
        \begin{cases}
            \Delta f(\bm a) = 0\qquad &\forall \bm a\in\overset{\circ}{\ball}_R(\brho)\\
            f(\bm a) = \hat{f}(\bm a) \quad &\forall \bm a\in\partial\ball_R(\brho).
        \end{cases}
    \end{align}
    The boundary condition $\hat{f}\in\mathcal{C}(\partial\ball_R(\brho),\IR)$ is given on the surface of the ball denoted by $\partial\ball_R(\brho)$ while the interior is denoted by $\overset{\circ}{\ball}_R(\brho)$.
\end{definition}

First, we introduce a solution of \eqref{eq:Th_LaplaceDirichlet} on the unit ball, which we extend later for an arbitrary radius and center of the ball.

\begin{proposition}\label{Prop:CoeffsUnitSphere}
    Let $f\in\mathcal{C}^2(\ball_1(\bm 0),\IR)$ and $\signatur{\hat{f}}{\IS^2}{\IR}\in L^2(\IS^2)$ fulfill \eqref{eq:Th_LaplaceDirichlet}. Then, $f$ can be written as solid spherical harmonic expansion
    \begin{align}\label{eq:Th_solidExp}
        f(\bm a) = \sum_{l=0}^{\infty}\sum_{m=-l}^l \gamma_{l,m} Z_l^m(\bm a)\qquad \forall \bm a\in\ball_1(\bm 0)
    \end{align}
    where the normalized real solid spherical harmonics $Z_l^m$ as an extension of normalized real spherical harmonics are defined by
    \begin{align*}
            \signatur{Z_l^m}{\IR^3}{\IR}, (\kugel) \mapsto K_l^{\abs{m}} r^l P_l^{\abs{m}}(\cos \theta)\cdot
            \begin{cases}
                \sqrt{2}\cos (m\phi) & m > 0\\
                \sqrt{2}\sin (\abs{m}\phi) & m < 0\\
                1 & m = 0
            \end{cases}
    \end{align*}
    with $ K_l^m = \sqrt{\frac{(l-m)!}{(l+m)!}} $ and the associated Legendre polynomials $P_l^m$ \textup{\cite{Arfken2005}}.
    The solid spherical coefficients $\gamma_{l,m}\in\IR$ of the expansion can be calculated by the orthogonal projection
    \begin{align*}
        \gamma_{l,m} = \frac{2l+1}{4\pi}\int_{\IS^2}\hat{f}(\bm a)Z_l^m(\bm a) \d\bm a.
    \end{align*}
\end{proposition}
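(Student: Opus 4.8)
The plan is to run the classical separation-of-variables argument, but phrased so that it exhibits the \emph{given} solution $f$ as the claimed series rather than merely constructing some solution. I would need three ingredients: the structure of radial solutions of Laplace's equation, completeness of the spherical harmonics in $L^2(\IS^2)$, and a uniqueness/regularity statement linking the series back to $f$.

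First I would recall that, written in spherical coordinates $(\kugel)$, the restrictions $Z_l^m|_{\IS^2}$ are (by their construction from the associated Legendre polynomials) eigenfunctions of the Laplace--Beltrami operator on $\IS^2$ with eigenvalue $-l(l+1)$, and that each $Z_l^m$ is a homogeneous harmonic polynomial of degree $l$, hence $Z_l^m\in\mathcal{C}^2(\ball_1(\bm 0))$ with $\Delta Z_l^m = 0$. Conversely, the ansatz $g(r)Y(\theta,\phi)$ with $Y$ an eigenfunction to $-l(l+1)$ reduces Laplace's equation to the Euler equation $r^2 g'' + 2rg' - l(l+1)g = 0$, spanned by $r^l$ and $r^{-(l+1)}$. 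This identifies the $Z_l^m$ (the $r^l$-branch, the only one regular at the origin) as the admissible building blocks.

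Next, for fixed $\rho\in(0,1)$ I would expand the smooth function $\bm b\mapsto f(\rho\bm b)$ on $\IS^2$ in the complete orthogonal system $\{Z_l^m|_{\IS^2}\}$, using $\int_{\IS^2}(Z_l^m)^2\d\bm a = \tfrac{4\pi}{2l+1}$ \cite{Arfken2005}: $f(\rho\,\cdot)=\sum_{l,m}c_{l,m}(\rho)Z_l^m|_{\IS^2}$ with $c_{l,m}(\rho)=\tfrac{2l+1}{4\pi}\int_{\IS^2}f(\rho\bm b)Z_l^m(\bm b)\d\bm b$, the convergence being uniform because $f(\rho\,\cdot)$ is $\mathcal{C}^\infty$ (harmonic functions are analytic in the interior). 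Substituting this into $\Delta f=0$ and using orthogonality together with Step~1 forces each $c_{l,m}$ to solve the Euler equation, so $c_{l,m}(\rho)=A_{l,m}\rho^{l}+B_{l,m}\rho^{-(l+1)}$; boundedness of $f$ near $\bm 0$ kills $B_{l,m}$, and continuity of $c_{l,m}$ up to $\rho=1$ with the Dirichlet condition $f|_{\partial\ball_1(\bm 0)}=\hat f$ yields $A_{l,m}=c_{l,m}(1)=\tfrac{2l+1}{4\pi}\int_{\IS^2}\hat f Z_l^m\d\bm a=\gamma_{l,m}$. Hence $f(\rho\bm b)=\sum_{l,m}\gamma_{l,m}\rho^{l}Z_l^m(\bm b)=\sum_{l,m}\gamma_{l,m}Z_l^m(\rho\bm b)$ on the open ball, and on $\partial\ball_1(\bm 0)$ the identity is exactly the spherical-harmonic expansion of $\hat f$. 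To close the argument I would check that $\sum\gamma_{l,m}Z_l^m$ converges absolutely and uniformly on each $\ball_\rho(\bm 0)$, $\rho<1$ --- from Bessel's inequality $\sum_{l,m}\tfrac{4\pi}{2l+1}\gamma_{l,m}^2=\norm{\hat f}_{L^2(\IS^2)}^2<\infty$ plus the standard growth bounds on normalized $P_l^m$ --- so that the right-hand side of \eqref{eq:Th_solidExp} is genuinely a harmonic function and all termwise operations are justified; alternatively one may deduce $f=\sum\gamma_{l,m}Z_l^m$ from the maximum principle applied to the (harmonic, vanishing-trace) difference.

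The hard part, I expect, will be the interface between the merely-$L^2$ hypothesis on $\hat f$ and the $\mathcal{C}^2$ interior solution: one must make sure the boundary value is attained strongly enough to pin down $A_{l,m}$ (here via continuity of $c_{l,m}$ on $(0,1]$, valid since $\hat f$ is in fact the $\mathcal{C}^2$ trace of $f$), and that the spherical-harmonic series of $f(\rho\,\cdot)$ converges to $f(\rho\,\cdot)$ and not merely in $L^2$; both are consequences of interior analyticity but deserve an explicit remark. The remaining pieces --- the Euler-equation solution space, the growth of $P_l^m$, and the geometric domination of the tails for $\rho<1$ --- are routine.
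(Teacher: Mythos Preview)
Your argument is correct and considerably more thorough than what the paper actually does. The paper's proof is a single sentence: it invokes that the restrictions $Z_l^m|_{\IS^2}$ form an orthogonal basis of $L^2(\IS^2)$, so the coefficients of $\hat f$ are given by the stated orthogonal projection, and then (in the remark immediately following) observes that any solid harmonic expansion is harmonic --- leaving uniqueness, convergence in the interior, and the passage from the $L^2$ boundary expansion to the $\mathcal{C}^2$ interior function entirely implicit. You instead run the full separation-of-variables programme: expanding $f(\rho\,\cdot)$ on concentric spheres, deriving and solving the radial Euler equation, eliminating the singular branch by boundedness, matching at $\rho=1$, and explicitly justifying termwise operations via interior analyticity and Bessel's inequality. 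What your approach buys is an honest, self-contained derivation that actually addresses the analytic subtleties you flag (the $L^2$-trace versus $\mathcal{C}^2$-interior interface, pointwise versus $L^2$ convergence); what the paper's approach buys is brevity, at the cost of treating the proposition essentially as a citation of a standard fact.
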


\begin{proof}\leavevmode
    The proposition holds since the restricted solid spherical harmonics $Z_l^m|_{\IS^2}$ form an orthogonal basis of $L^2(\IS^2)$ and thus the coefficients can be calculated by the orthogonal projection.
\end{proof}

\begin{remark}
    Each solid spherical harmonic expansion $\sum\limits_{l=0}^\infty \sum\limits_{m=-l}^l \gamma_{l,m} Z_l^m$ satisfies (2)~\cite{Arfken2005}.
\end{remark}

Next, we generalize proposition~\ref{Prop:CoeffsUnitSphere} for arbitrary radius $R$ and center $\brho$ of the ball $\ball_R(\brho)$. In this case, $\brho$ determines the center of the series expansion. 

\begin{proposition}\label{prop:SHE_arbBall}
    Let $f\in\cC^2(\ball_R(\brho),\IR)$ and $\signatur{\hat{f}}{\partial\ball_R(\brho)}{\IR}\in L^2(\partial\ball_R(\brho))$ fulfill eq.~\eqref{eq:Th_LaplaceDirichlet} for arbitrary $R\in\IR_+$ and $\brho\in\IR^3$. The coefficients $\gamma_{l,m}(\brho, R)$ depending on $\brho$ and $R$ can be calculated by
    \begin{align}\label{eq:Th_clmR}
        \gamma_{l,m}(\brho, R)
        = \frac{2l+1}{4\pi} \int_{\IS^2} \hat{f}{\left(R\bm a + \brho\right)}Z_l^m(\bm a) \d \bm a.
    \end{align}
    With $ \gamma_{l,m}(\brho) = \frac{1}{R^l}\gamma_{l,m}(\brho,R) $, the solid harmonic expansion of $f$ can be formulated as
    \begin{align}\label{eq:Th_ExpansionArbBall}
        f^{\brho}(\bm a) = 
        \sum_{l=0}^\infty\sum_{m=-l}^l \gamma_{l,m}(\brho) Z_l^m(\bm a)
        \qquad \forall \bm a\in\ball_R(\bm 0),
    \end{align}
    with $f^\brho := \tau_\brho(f)$ where $\signatur{\tau_\brho(f)}{\ball_R(\bm 0)}{\IR},\bm a \mapsto f(\bm a + \brho)$ denotes the shift operator on $\cC^2$ functions.
    The center $\brho$ determines the origin of the underlying coordinate system, which is denoted by a superscript $\brho$ for $f$. 
\end{proposition}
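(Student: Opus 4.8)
The plan is to reduce everything to Proposition~\ref{Prop:CoeffsUnitSphere} by the affine change of variables that carries $\ball_R(\brho)$ onto the unit ball. First I would set $\signatur{g}{\ball_1(\bm 0)}{\IR}$, $\bm a\mapsto f(R\bm a+\brho)$. Since $\bm a\mapsto R\bm a+\brho$ is a smooth diffeomorphism of $\ball_1(\bm 0)$ onto $\ball_R(\brho)$ and $f\in\cC^2(\ball_R(\brho),\IR)$, the function $g$ lies in $\cC^2(\ball_1(\bm 0),\IR)$; the chain rule gives $\Delta g(\bm a)=R^2(\Delta f)(R\bm a+\brho)=0$ on the open unit ball, and for $\bm a\in\IS^2$ one has $g(\bm a)=f(R\bm a+\brho)=\hat f(R\bm a+\brho)\eqdopp\hat g(\bm a)$, with $\hat g\in L^2(\IS^2)$ because the substitution $\bm a\mapsto R\bm a+\brho$ only rescales the surface measure on $\partial\ball_R(\brho)$ by the constant factor $R^2$. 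Hence $g$ and $\hat g$ satisfy \eqref{eq:Th_LaplaceDirichlet} on the unit ball and Proposition~\ref{Prop:CoeffsUnitSphere} applies to $g$.

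Applying that proposition yields a convergent expansion $g=\sum_{l,m}\tilde\gamma_{l,m}Z_l^m$ on $\ball_1(\bm 0)$ with $\tilde\gamma_{l,m}=\frac{2l+1}{4\pi}\int_{\IS^2}\hat g(\bm a)\,Z_l^m(\bm a)\d\bm a$. Unwinding the definition of $\hat g$ shows that $\tilde\gamma_{l,m}$ equals exactly the quantity $\gamma_{l,m}(\brho,R)$ of \eqref{eq:Th_clmR}, which settles the first assertion. For the second assertion I would note that for $\bm a\in\ball_R(\bm 0)$ one has $\bm a/R\in\ball_1(\bm 0)$ and $g(\bm a/R)=f(R\cdot\bm a/R+\brho)=f(\bm a+\brho)=f^\brho(\bm a)$, so that $f^\brho(\bm a)=\sum_{l,m}\gamma_{l,m}(\brho,R)\,Z_l^m(\bm a/R)$, with convergence on all of $\ball_R(\bm 0)$ inherited from the convergence of the $g$-expansion on $\ball_1(\bm 0)$.

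The last step, and the only one needing a short argument, is the homogeneity of the solid harmonics: from the definition, $Z_l^m(\kugel)$ carries the radial factor $r^l$ while the remaining factors depend only on $\theta$ and $\phi$, so $Z_l^m(\lambda\bm a)=\lambda^l Z_l^m(\bm a)$ for every $\lambda>0$; taking $\lambda=1/R$ gives $Z_l^m(\bm a/R)=R^{-l}Z_l^m(\bm a)$. Substituting this into the previous display yields $f^\brho(\bm a)=\sum_{l,m}R^{-l}\gamma_{l,m}(\brho,R)\,Z_l^m(\bm a)=\sum_{l,m}\gamma_{l,m}(\brho)\,Z_l^m(\bm a)$, which is \eqref{eq:Th_ExpansionArbBall}. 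I do not expect a genuine obstacle here; the two points to get right are the $R^2$-scaling of the Laplacian under the dilation (so that harmonicity is preserved and Proposition~\ref{Prop:CoeffsUnitSphere} is truly applicable) and the degree-$l$ homogeneity of $Z_l^m$, which is precisely what produces the $1/R^l$ normalization in the definition of $\gamma_{l,m}(\brho)$.
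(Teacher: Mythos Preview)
Your proof is correct and follows essentially the same route as the paper: both pull back $f$ to the unit ball via $\bm a\mapsto f(R\bm a+\brho)$, apply Proposition~\ref{Prop:CoeffsUnitSphere} there, and then undo the scaling using the degree-$l$ homogeneity of $Z_l^m$ to obtain the $1/R^l$ normalization. If anything, you are slightly more explicit than the paper in checking that harmonicity and the $L^2$ boundary condition survive the affine change of variables.
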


\begin{proof}
    Let $\signatur{f^{\brho,R} := \tau_\brho(\sigma_R(f))}{\ball_1(\bm 0)}{\IR},\bm a \mapsto f(R\bm a + \brho)$ and $\hat{f}^{\brho,R} := \tau_\brho(\sigma_R(\hat{f}))$ analogous, where $\sigma_R$ denotes the scaling operator on $\cC^2$ functions. Using proposition~\ref{Prop:CoeffsUnitSphere}, the coefficients of $f^{\brho,R}$ can be calculated by 
    \begin{align*}
        \gamma_{l,m}(\brho, R)
        = \frac{2l+1}{4\pi} \int_{\IS^2} \hat{f}^{\brho,R}{\left(\bm a\right)}Z_l^m(\bm a) \d \bm a,
    \end{align*}
    which yields~\eqref{eq:Th_clmR}. Thus, we get
    
    \begin{align*}
        f^{\brho,R} = \tau_\brho(\sigma_R(f)) &= \sum_{l=0}^{\infty}\sum_{m=-l}^l \gamma_{l,m}(\brho,R) Z_l^m\\
        \Leftrightarrow~
        \tau_\brho(f) = \tau_\brho(\sigma_R(\sigma^{-1}_R(f)))
        &= \sum_{l=0}^{\infty}\sum_{m=-l}^l \gamma_{l,m}(\brho,R)\,\sigma^{-1}_R(Z_l^m) \\
        &= \sum_{l=0}^{\infty}\sum_{m=-l}^l 
        \underbrace{\frac{\gamma_{l,m}(\brho,R)}{R^l}}_{=:\,\gamma_{l,m}(\brho)}
        Z_l^m
    \end{align*}
    since $Z_l^m$ are homogeneous polynomials so that $\sigma^{-1}_R(Z_l^m(\bm a)) = Z_l^m\!{\left(\frac{1}{R}\bm a\right)} = \frac{1}{R^l}Z_l^m(\bm a)$. With $f^\brho := \tau_\brho(f)$ eq.~\eqref{eq:Th_ExpansionArbBall} follows.
    
\end{proof}

\begin{note}
    For a better readability the indices $R$ and $\brho$ are omitted if $R=1$ and $\brho = \bm 0$.
\end{note}

\subsection{Translation}\label{ssec:Translation}

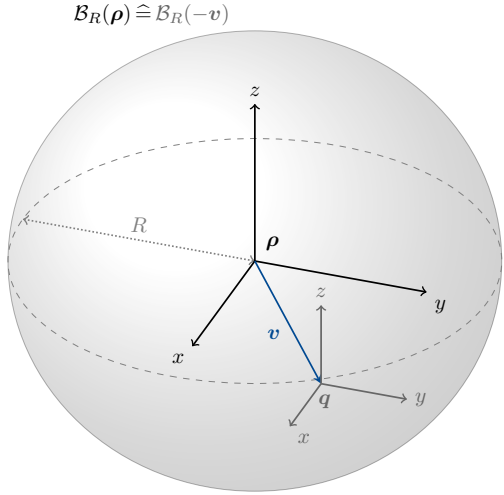
\begin{figure}
    \centering
    {\resizebox{8cm}{!}{\tdplotsetmaincoords{60}{110}

\pgfmathsetmacro{\rvec}{1.4}
\pgfmathsetmacro{\thetavec}{90}
\pgfmathsetmacro{\phivec}{35}

\begin{tikzpicture}[scale=3,tdplot_main_coords]

\coordinate (O) at (0,0,0);

\tdplotsetcoord{P}{\rvec}{\thetavec}{\phivec}
\tdplotsetcoord{Q}{\rvec-0.2}{\thetavec}{\phivec+135}

\tdplotsetrotatedcoordsorigin{(O)}
\tdplotsetrotatedcoords{20}{80}{0}
\draw [ball color=white,very thin,tdplot_rotated_coords,opacity=0.3] (0,0,0) circle (1.35); 
\draw [dashed,gray] (0,0,0) circle (1.35); 
\draw[thick,<->,color=gray,densely dotted] (O) -- (0,-1.35,0) node[midway,above] {$R$}; 
\node[gray] at (0,-0.6,1.45) {$\textcolor{black}{\ball_R(\brho)\,\widehat{=}\, } \textcolor{black!60}{\ball_R(-\bm v)}$}; 

\draw[thick,->] (0,0,0) -- (1,0,0) node[anchor=north east]{$x$};
\draw[thick,->] (0,0,0) -- (0,1,0) node[anchor=north west]{$y$};
\draw[thick,->] (0,0,0) -- (0,0,1) node[anchor=south]{$z$};

\node[above right = 0.08cm] {$\brho$};

\draw[-stealth,color=ibidark,thick] (O) -- (P) node[midway, below left=-0.01cm]{$\bm v$};

\tdplotsetrotatedcoords{0}{0}{0}

\tdplotsetrotatedcoordsorigin{(P)}

\draw[thick,tdplot_rotated_coords,->,color=black!60] (0,0,0) -- (.5,0,0) node[anchor=north west]{$x$};
\draw[thick,tdplot_rotated_coords,->,color=black!60] (0,0,0) -- (0,.5,0) node[anchor=west]{$y$};
\draw[thick,tdplot_rotated_coords,->,color=black!60] (0,0,0) -- (0,0,.5) node[anchor=south]{$z$};

\node[below = 0.06cm, xshift=0.06cm, color=black!60] at (P) {$\bm q$};

\end{tikzpicture}}}
    \caption{Different coordinate systems of the coefficients with the domain of the function $f$. The black coordinate system represents the initial coordinate system of the coefficients at the expansion point $\brho$. Using a shift $\bm v$, the coefficients depend on the shifted gray coordinate system with its origin at $\bm q = \brho + \bm v$. In this coordinate system the spherical domain of the function $f$ is now given by $\ball_R(-\bm v)$.}
    \label{fig:Th_CoordinateSystems}
\end{figure}

The coefficients $\gamma_{l,m}(\brho)$ correspond to a solid harmonic expansion around the center $\brho$ of the domain of the boundary condition. Next, we introduce a translation operator $\hat{\tau}$ that allows to transform them, such that they correspond to a solid harmonic expansion around a new center point $\bm q \in \ball_R(\brho)$. For instance, the coefficients $\gamma_{l,m}(\bm 0)$ can be calculated by $\hat{\tau}_{-\brho}{\big(\gamma_{l,m}(\brho)\big)}$ if $\bm 0 \in \ball_R(\brho)$.
The coordinate systems centered at the original and new expansion point are shown in Fig.~\ref{fig:Th_CoordinateSystems}. 

Henceforth, we assume $\signatur{f}{\ball_R(\brho)}{\IR}$ to be a polynomial of degree $L\in\IN_0$ and $\signatur{\hat{f}}{\partial\ball_R(\brho)}{\IR}\in L^2(\partial\ball_R(\brho))$ fulfill~\eqref{eq:Th_LaplaceDirichlet} for $R\in\IR_+$ and $\brho\in\IR^3$.

\begin{definition} We define the truncated solid harmonic expansion as a linear operator
    \begin{align*}
        \mathcal{S}_L : \IR^{(L+1)^2} &\rightarrow 
        \bigcup\limits_{\bm s\in \ball_R(\bm 0)}\hspace{-0.2cm}
        \mathcal{C}^2(\ball_R(\bm s),\IR),\\
        \bgam(\bm q) &\mapsto 
        \left(\bm a \mapsto  
        \sum_{l=0}^L\sum_{m=-l}^l \gamma_{l,m}(\bm q) Z_l^m(\bm a)\right),
    \end{align*}
    where $\bgam(\bm q) = \left(\gamma_{l,m}(\bm q)\right)_{\substack{l = 0,\dots,L \\ m = -l,\dots,l}} \in \IR^{(L+1)^2}$ is a vector containing all coefficients up to $l=L$ at expansion center $\bm q\in\ball_R(\brho)$. Since the domain of the expansion depends on the boundary condition used for the calculation of the coefficients, it holds that $ S_L(\bgam(\bm q)) \in \cC^2(\ball_R(\brho - \bm q),\IR) $.
\end{definition}

\begin{remark}\leavevmode
    \renewcommand{\labelenumi}{\textit{\roman{enumi})}}
    \begin{enumerate}
        \item Since we assume that $f$ is a polynomial of degree $L$,  \eqref{eq:Th_ExpansionArbBall} is equivalent to
        \begin{align*}
            f^\brho = \cS_L(\bgam(\brho))
        \end{align*}
        for $\bgam(\brho)$ calculated with \eqref{eq:Th_clmR}.
        
        \item A translation of the coordinate system by a shift $\bm v = \bm q - \brho$ to a new center $\bm q\in\ball_R(\brho)$ of the series expansion for $\bm a\in\ball_R(-\bm v)$ is described by
        \begin{align*}
            f^{\bm q}(\bm a) 
            = \tau_{\bm v}\big(f^{\brho}(\bm a)\big)
            &= \tau_{\bm v} \big(\mathcal{S}_L(\bgam(\brho))(\bm a)\big)\\
            &= \sum_{l=0}^L\sum_{m=-l}^l \gamma_{l,m}(\brho) \,\tau_{\bm v}\big(Z_l^m(\bm a)\big).
        \end{align*}
    \end{enumerate}
\end{remark}

The translation of the solid spherical harmonics can be transferred to the coefficients as it is stated in the following theorem.

\begin{theorem}\label{Th:Translation}
    For any $\bm q \in \ball_R(\bm \brho)$ an operator $\signatur{\hat{\tau}_{\bm v}}{\IR^{(L+1)^2}}{\IR^{(L+1)^2}}$ exists with $\bm v = \bm q - \brho$ such that 
    \begin{align*}
        \xymatrix{
            \IR^{(L+1)^2}  
            \ar@{-->}[r]^{\hat{\tau}_{\bm v}} \ar[d]_{\mathcal{S}_L} 
            & \IR^{(L+1)^2} \ar[d]_{\mathcal{S}_L}\\
           \mathcal{C}^2(\ball_R(\brho),\IR) 
           \ar[r]^{\tau_{\bm v}} 
           & \mathcal{C}^2(\ball_R(- \bm v),\IR)  
        }
    \end{align*}
    commutes, i.e.
    \begin{align*}
        \tau_{\bm v} \circ \mathcal{S}_L(\bgam(\brho)) 
        = \mathcal{S}_L \circ \hat{\tau}_{\bm v}\big(\bgam(\brho)\big).
    \end{align*}
    The actual calculation of $\hat{\tau}_{\bm v}\big(\bgam(\brho)\big) = \bgam(\bm q)$ is given by equations \eqref{eq:Coeffs_m>0} to \eqref{eq:Coeffs_l=0} in appendix \ref{ssec:Ap_AdditionCoefficients}.
\end{theorem}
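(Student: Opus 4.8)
The plan is to read the commuting square as a statement about finite-dimensional linear algebra: both $\mathcal{S}_L$ (as a map from coefficient vectors to polynomials on $\IR^3$) and $\tau_{\bm v}$ are linear, so it suffices to show that $\tau_{\bm v}$ maps the image of $\mathcal{S}_L$ into itself and that $\mathcal{S}_L$ is injective; then $\hat\tau_{\bm v}$ is forced to be $\mathcal{S}_L^{-1}\circ\tau_{\bm v}\circ\mathcal{S}_L$ (the inverse taken on the image of $\mathcal{S}_L$), which is automatically a well-defined linear operator $\IR^{(L+1)^2}\to\IR^{(L+1)^2}$. The explicit formula in the appendix is then just the matrix of this operator in the basis $\{Z_l^m\}$.

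First I would record the two structural facts about the $Z_l^m$ that drive everything. Each $Z_l^m$ is, in Cartesian coordinates, a homogeneous harmonic polynomial of degree $l$ (this is the defining property of a solid harmonic, and harmonicity is the Remark following Proposition~\ref{Prop:CoeffsUnitSphere}). Moreover the family $\{Z_l^m : 0\le l\le L,\ -l\le m\le l\}$ is a basis of the space $\mathcal{H}_{\le L}$ of harmonic polynomials on $\IR^3$ of degree at most $L$: linear independence follows from Proposition~\ref{Prop:CoeffsUnitSphere} (the restrictions $Z_l^m|_{\IS^2}$ are mutually orthogonal in $L^2(\IS^2)$, hence linearly independent, hence so are the homogeneous polynomials $Z_l^m$ themselves), and the dimension count $\sum_{l=0}^L(2l+1)=(L+1)^2=\dim\mathcal{H}_{\le L}$ shows they span. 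In particular $\mathcal{S}_L$ is injective, since $\mathcal{S}_L(\bgam)=0$ forces every $\gamma_{l,m}=0$, and $\mathcal{S}_L$ is a linear isomorphism onto $\mathcal{H}_{\le L}$ (the domain restrictions in its definition being cosmetic relabellings of the ball).

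Next I would observe that $\tau_{\bm v}$ preserves $\mathcal{H}_{\le L}$: if $p\in\mathcal{H}_{\le L}$ then $\tau_{\bm v}(p)(\bm a)=p(\bm a+\bm v)$ is again a polynomial in $\bm a$ of degree $\le L$, and $\Delta\big(\tau_{\bm v}(p)\big)=(\Delta p)(\,\cdot\,+\bm v)=0$, so $\tau_{\bm v}(p)\in\mathcal{H}_{\le L}$. Applying this to $p=\mathcal{S}_L(\bgam(\brho))=f^\brho$, which is a polynomial of degree $L$ by the standing assumption and harmonic by the same Remark, the basis property yields unique reals, which we name $\gamma_{l,m}(\bm q)$, with
\[
 \tau_{\bm v}\big(\mathcal{S}_L(\bgam(\brho))\big)=\sum_{l=0}^L\sum_{m=-l}^l\gamma_{l,m}(\bm q)\,Z_l^m=\mathcal{S}_L\big(\bgam(\bm q)\big).
\]
Setting $\hat\tau_{\bm v}(\bgam(\brho)):=\bgam(\bm q)$ makes the square commute by construction, and since this prescription is exactly $\mathcal{S}_L^{-1}\circ\tau_{\bm v}\circ\mathcal{S}_L$, a composite of linear maps defined on all of $\IR^{(L+1)^2}$, the operator $\hat\tau_{\bm v}$ is well defined and linear. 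The domain bookkeeping is the one already recorded in the definition of $\mathcal{S}_L$: $\mathcal{S}_L(\bgam(\brho))$ lives on $\ball_R(\bm 0)$, and $\tau_{\bm v}$ shifts this domain to $\ball_R(-\bm v)$, which is the domain attached to $\bgam(\bm q)$ with $\bm q=\brho+\bm v$.

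Finally, the explicit entries of $\hat\tau_{\bm v}$ come from the translation (addition) identity for a single solid harmonic: expanding $Z_l^m(\bm a+\bm v)$ in the basis gives $Z_l^m(\bm a+\bm v)=\sum_{l'\le l}\sum_{m'}t^{l,m}_{l',m'}(\bm v)\,Z_{l'}^{m'}(\bm a)$, where the sum runs only over $l'\le l$ because each homogeneous-degree-$l'$ component of the harmonic polynomial $Z_l^m(\,\cdot\,+\bm v)$ is itself harmonic and therefore lies in $\operatorname{span}\{Z_{l'}^{m'}\}_{m'}$; hence the translation matrix is block lower triangular in $l$. Stacking these coefficients and composing with $\bgam(\brho)$ produces equations \eqref{eq:Coeffs_m>0}--\eqref{eq:Coeffs_l=0} of appendix~\ref{ssec:Ap_AdditionCoefficients}. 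I expect this last computation to be the only real obstacle: it must be carried out for the \emph{real} (cosine/sine) normalized harmonics, so one has to track the constants $K_l^{|m|}$ and the factors of $\sqrt 2$ through a case split on the signs of $m$ and $m'$ (products such as $\cos(m\phi)$ against the azimuthal dependence introduced by $\bm v$). The cleanest route is probably to pass to the complex regular solid harmonics, use their standard binomial/Clebsch--Gordan-type translation formula, and convert back; the sign and normalization bookkeeping in that conversion is where errors are most likely to occur.
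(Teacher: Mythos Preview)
Your proposal is correct but takes a genuinely different route from the paper. You separate existence from the explicit formula: you note that $\mathcal{S}_L$ is a linear isomorphism onto $\mathcal{H}_{\le L}$, that $\tau_{\bm v}$ preserves $\mathcal{H}_{\le L}$, and hence that $\hat\tau_{\bm v}:=\mathcal{S}_L^{-1}\circ\tau_{\bm v}\circ\mathcal{S}_L$ exists and is linear. The paper never makes this abstract argument; it proves the theorem purely constructively, by writing down $\hat\tau_{\bm v}$ explicitly and showing the diagram commutes. Concretely, the paper (i) adapts the addition theorem for \emph{real} unnormalized solid harmonics from Rico et al.\ to the normalized $Z_l^m$ (splitting into the cases $m>0$, $m=0$, $m<0$), (ii) substitutes this into $\sum_{l,m}\gamma_{l,m}(\brho)\,Z_l^m(\bm a+\bm v)$, and (iii) reorders the quadruple sums so that $Z_l^m(\bm a)$ is factored out---a lengthy index-swapping exercise that yields \eqref{eq:Coeffs_m>0}--\eqref{eq:Coeffs_l=0} directly. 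Your approach buys a clean, immediate existence proof and makes clear why the matrix is block lower triangular in $l$; the paper's approach delivers the usable formulas at once, and by staying with real harmonics throughout it avoids the real$\leftrightarrow$complex conversion you flag as the likely source of sign/normalization errors. If you want to match the paper's output, the route via the Rico et al.\ real addition theorem is the one actually taken, not the detour through complex regular solid harmonics.
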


\begin{proof}
    The proof of the theorem is given in the appendix. First, we adapt the addition theorem for unnormalized real solid spherical harmonics provided by Rico et al. in their work~\cite{Rico2013} to our normalized ones in section~\ref{ssec:Ap_AdditionHarmonics}. Applying the addition theorem to the solid harmonic expansion and reordering of the sums leads to the addition theorem for the solid coefficients and with that to the proof of theorem as it is shown in the second section~\ref{ssec:Ap_AdditionCoefficients}.
\end{proof}

\begin{remark}\leavevmode
    \renewcommand{\labelenumi}{\textit{\roman{enumi})}}
    \begin{enumerate}
        \item The effort for calculating $(L+1)^2$ coefficients $\hat{\tau}_{\bm v}\big(\bgam(\brho)\big)$ with \eqref{eq:Coeffs_m>0} to \eqref{eq:Coeffs_l=0} is $\onot{L^4}$.
    
        \item Theorem~\ref{Th:Translation} can be generalized for a shift between arbitrary points $\bm p, \bm q \in \ball_R(\brho)$.
    \end{enumerate}
\end{remark}
    
\subsection{Efficient Quadrature}\label{ssec:Quadrature}
In order to obtain the solid coefficients for a polynomial $\signatur{f}{\ball_R(\brho)}{\IR}$ of degree $L\in\IN_0$ its values on the boundary $\partial\ball_R(\brho)$ have to be known. For instance in the magnetic field determination application scenario, these values can be measured. In MPI, for example, the choice of the measurement points is only restricted by the size and shape of the scanner bore and so far a Gauss-Legendre quadrature was used in \ac{MPI} for the calculation of the coefficients~\cite{Bringout2020}. A more efficient way to choose the measurement points are spherical t-designs~\cite{Beentjes2015}, which are introduced next.

\begin{definition}\label{Def_tDesign}
    A \textit{spherical t-design} is a set of nodes $\set{\bm a_k}_{k=1,\dots,N}\subseteq \IS^2$ such that
    \begin{align*}
        \int_{\IS^2} \mathcal{Y}(\bm a) \d\bm a = \frac{4\pi}{N}\sum_{k=1}^{N} \mathcal{Y}(\bm a_k)\quad\forall \mathcal{Y} \in \Pi^t
    \end{align*}
    with $\Pi^t = \text{span}{\set{Z_l^m|_{\IS^2} : l \leq t}}$ being the set of all polynomials up to degree $t\in\IN$ on the unit sphere.~\cite{Beentjes2015}
\end{definition}   

\begin{remark}
    The spherical t-design is a very efficient sampling pattern for the quadrature on a spherical surface. Multiple t-designs can be found in~\cite{Hardin8Design}. For instance, the smallest known $8$-design only consists of $36$ points~\cite{Hardin1996}. In comparison, $45$ Gauss-Legendre quadrature nodes are required for the same accuracy~\cite{Weber2017}.
\end{remark}

\begin{proposition}\label{Prop:CoeffsTDesign}
    Assume $\signatur{f}{\ball_R(\brho)}{\IR}$ to be a polynomial of degree $L \in \IN_0$ fulfilling  \eqref{eq:Th_LaplaceDirichlet} with $\signatur{\hat{f}}{\partial\ball_R(\brho)}{\IR}\in L^2(\partial\ball_R(\brho))$.
    Let $\set{\bm a_k}_{k=1,\dots,N}$ be a $2L$-design. For $l\leq L$ it holds that
        \begin{align*}
            \gamma_{l,m}(\brho)
            = \frac{2l+1}{N R^l}\sum_{k=1}^N \hat{f}(R\bm a_k+\brho)Z_l^m(\bm a_k).
        \end{align*}
\end{proposition}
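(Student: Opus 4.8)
The plan is to reduce the claimed identity to the orthogonal-projection formula already supplied by Proposition~\ref{prop:SHE_arbBall}, and then replace the integral by the quadrature sum using the defining property of a $2L$-design. Concretely, Proposition~\ref{prop:SHE_arbBall} gives
\begin{align*}
    \gamma_{l,m}(\brho) = \frac{1}{R^l}\gamma_{l,m}(\brho,R) = \frac{2l+1}{4\pi R^l}\int_{\IS^2}\hat{f}(R\bm a+\brho)\,Z_l^m(\bm a)\d\bm a,
\end{align*}
so it suffices to show that the integrand, viewed as a function on $\IS^2$, lies in the space $\Pi^{2L}$ of Definition~\ref{Def_tDesign}; then the $2L$-design property converts the integral into $\frac{4\pi}{N}\sum_{k} \hat{f}(R\bm a_k+\brho)Z_l^m(\bm a_k)$, and the constants collapse to the asserted prefactor $\frac{2l+1}{NR^l}$.

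The heart of the argument is the degree bookkeeping. Since $f$ is a polynomial of degree $L$, the pull-back $\bm a \mapsto f(R\bm a + \brho)$ is again a polynomial of degree at most $L$, because an affine substitution cannot raise the degree; hence so is its restriction $\hat{f}(R\,\cdot\, + \brho)$ to $\IS^2$. On the other hand $Z_l^m$ is by construction a homogeneous polynomial of degree $l$ in $\bm a$, so the product $\bm a \mapsto \hat{f}(R\bm a+\brho)\,Z_l^m(\bm a)$ is the restriction to $\IS^2$ of a polynomial of degree at most $L + l \le 2L$, using $l \le L$. It then remains to identify restrictions of polynomials of degree $\le d$ with $\Pi^d = \textup{span}\set{Z_{l'}^{m'}|_{\IS^2} : l' \le d}$: this is the classical fact that the space of polynomial functions on $\IS^2$ of degree $\le d$ decomposes as the direct sum of the spaces of spherical harmonics of degrees $0,\dots,d$ (repeatedly using $x^2+y^2+z^2 = 1$ on the sphere to absorb even-degree excess), each such space being spanned by the $Z_{l'}^{m'}|_{\IS^2}$; see~\cite{Arfken2005}. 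Consequently $\hat{f}(R\,\cdot\,+\brho)\,Z_l^m \in \Pi^{2L}$.

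With this in hand, the $2L$-design property of $\set{\bm a_k}_{k=1,\dots,N}$ from Definition~\ref{Def_tDesign} applies verbatim to $\mathcal{Y} = \hat{f}(R\,\cdot\,+\brho)\,Z_l^m$, giving
\begin{align*}
    \int_{\IS^2}\hat{f}(R\bm a+\brho)Z_l^m(\bm a)\d\bm a = \frac{4\pi}{N}\sum_{k=1}^N \hat{f}(R\bm a_k+\brho)Z_l^m(\bm a_k).
\end{align*}
Substituting this into the projection formula above yields $\gamma_{l,m}(\brho) = \frac{2l+1}{NR^l}\sum_{k=1}^N \hat{f}(R\bm a_k+\brho)Z_l^m(\bm a_k)$, as claimed. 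The only genuinely delicate point is the assertion that products of spherical-harmonic restrictions stay within $\Pi^{2L}$; everything else is substitution of already-established formulas. I would isolate that point as a lemma-level fact about polynomial functions on the sphere and reference~\cite{Arfken2005}, rather than carry out the Clebsch--Gordan-type decomposition explicitly.
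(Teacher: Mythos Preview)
Your proof is correct and follows essentially the same route as the paper's: invoke Proposition~\ref{prop:SHE_arbBall} for the integral formula, observe that the integrand has polynomial degree at most $2L$, and apply the $2L$-design property from Definition~\ref{Def_tDesign} to replace the integral by the finite sum. If anything, you are more careful than the paper in justifying why the product actually lies in $\Pi^{2L}$ as defined (i.e., the identification of restricted polynomials with spans of spherical harmonics), which the paper simply asserts.
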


\begin{proof}
        Let $l\leq L$. Then, it holds that the degree of the product $\hat{f} Z_l^m$ is at most $2L$. 
        With proposition~\ref{prop:SHE_arbBall} and definition~\ref{Def_tDesign}, we get
        \begin{align*}
            \gamma_{l,m}(\brho)
            &= \frac{1}{R^l}\frac{2l+1}{4\pi} \int_{\IS^2}\hat{f}(R\bm a+\brho)Z_l^m(\bm a) \d\bm a \\
            &= \frac{1}{R^l}\frac{2l+1}{4\pi}\frac{4\pi}{N}\sum_{k=1}^N \hat{f}(R\bm a_k+\brho)Z_l^m(\bm a_k).
        \end{align*}
        Since $\deg(f) \leq L$ the solid harmonic expansion~\eqref{eq:Th_ExpansionArbBall} can be truncated at $l = L$.
\end{proof}
    
\section{Methods}\label{sec:Methods}

\subsection{Magnetic Fields}

In \ac{MPI}, we are interested in the quasi-static magnetic field inside the scanner bore, which are generated by electric currents outside the bore. As these fields fulfill Laplace's equation, we are able to apply proposition~\ref{prop:SHE_arbBall} and expand the field as a solid harmonic expansion.

\begin{lemma}\label{lem:MagnField_Laplace}
    Let $\bm B = (B_x,B_y,B_z) \in\mathcal{C}^2(\Omega,\IR^3)$ be a quasi-static magnetic field, where $\Omega\subseteq\IR^3$ describes the region where no electric current flows. Then, the components $B_i$ for $i=x,y,z$ fulfill Laplace's equation for all $\bm a \in \Omega$.
\end{lemma}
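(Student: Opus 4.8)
The plan is to invoke Maxwell's equations for a quasi-static magnetic field in a current-free region. In the region $\Omega$ where no electric current flows, the (quasi-static) Amp\`ere law gives $\nabla \times \bm B = \bm 0$, and Gauss's law for magnetism gives $\nabla \cdot \bm B = 0$ everywhere. The curl-free condition on the simply-structured region implies that $\bm B$ is locally the gradient of a scalar potential, $\bm B = -\nabla \Phi$ for some $\Phi \in \cC^3$; combining this with $\nabla \cdot \bm B = 0$ yields $\Delta \Phi = 0$, so $\Phi$ is harmonic. Since partial derivatives of harmonic functions are harmonic, each component $B_i = -\partial_i \Phi$ then satisfies $\Delta B_i = 0$ on $\Omega$.

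Alternatively — and this is cleaner because it avoids any discussion of whether a global scalar potential exists — I would derive $\Delta B_i = 0$ directly from the two field equations using the vector-calculus identity
\begin{align*}
    \nabla \times (\nabla \times \bm B) = \nabla(\nabla \cdot \bm B) - \Delta \bm B,
\end{align*}
where $\Delta \bm B = (\Delta B_x, \Delta B_y, \Delta B_z)$ denotes the componentwise (vector) Laplacian. With $\nabla \times \bm B = \bm 0$ the left-hand side vanishes, and with $\nabla \cdot \bm B = 0$ the first term on the right vanishes, leaving $\Delta \bm B = \bm 0$, i.e. $\Delta B_i = 0$ for $i = x,y,z$. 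The $\cC^2$ regularity assumed in the statement is exactly what is needed for all these second derivatives to exist and for the identity to be applied.

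The only real subtlety — and the step I would be most careful about — is justifying $\nabla \times \bm B = \bm 0$ on $\Omega$. This requires the quasi-static (magnetostatic) approximation: the Maxwell--Amp\`ere law $\nabla \times \bm B = \mu_0 \bm J + \mu_0 \epsilon_0 \,\partial_t \bm E$ reduces to $\nabla \times \bm B = \mu_0 \bm J$ when the displacement-current term is neglected, and then $\bm J = \bm 0$ on $\Omega$ (no current flows there) gives $\nabla \times \bm B = \bm 0$. I would state this explicitly as the physical hypothesis underlying the word ``quasi-static'' in the lemma. Everything after that is the routine identity above, so I would keep the write-up short: cite Maxwell's equations, invoke the quasi-static assumption to kill the curl, invoke Gauss's law to kill the divergence, apply the $\nabla \times \nabla \times$ identity, and read off $\Delta B_i = 0$.
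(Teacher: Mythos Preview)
Your proposal is correct. Your first route --- invoking $\nabla\cdot\bm B=0$ and $\nabla\times\bm B=\bm 0$, introducing a scalar potential $\Phi$ with $\bm B=-\nabla\Phi$, deducing $\Delta\Phi=0$, and concluding that $B_i=-\partial_i\Phi$ is harmonic --- is exactly the paper's argument (which cites \cite{Jackson1999,Weber2017} for the potential and Laplace steps). Your alternative via the identity $\nabla\times(\nabla\times\bm B)=\nabla(\nabla\cdot\bm B)-\Delta\bm B$ is a genuinely different and somewhat cleaner route: it dispenses with the scalar potential entirely and hence sidesteps any question about the topology of $\Omega$ or the regularity of $\Phi$, working directly at the $\cC^2$ level assumed for $\bm B$. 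The paper's version has the minor advantage of making contact with the physically familiar magnetic scalar potential, but your direct identity-based argument is the more self-contained of the two.
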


\begin{proof}
    The static magnetic field fulfills Maxwell's equations 
    \begin{align*}
        \nabla\cdot\bm B(\bm a) &= 0\\
        \nabla\times\bm B(\bm a) &= \bm 0
    \end{align*}
    for all $\bm a\in\Omega$~\cite{Jackson1999}. The second equation is equal to zero since no electric current flows in $\Omega$. Thus, it holds that $\bm B = -\nabla\Phi$ where $\signatur{\Phi}{\Omega}{\IR}$ is the magnetic scalar potential \cite{Weber2017, Jackson1999}. In \cite{Weber2017} it is shown that $-\Delta \Phi = 0$ follows, which implies that $\Delta B_i = 0$.
\end{proof}

\begin{note}
    In the following, the index $i$ denotes the $x$-, $y$-, and $z$-component of the magnetic field $\bm B$.
\end{note}

\begin{proposition}\label{Prop:MagnField_SphExp}
    Assume that the magnetic field $\bm B \in\mathcal{C}^2(\Omega,\IR^3)$ can be described by a polynomial of degree $L$. Let $R\in\IR_+$ and $\brho\in\IR^3$ such that $\ball_R(\brho)\subseteq\Omega$ and let $\set{\bm a_k}_{k=1,\dots,N}$ be a $2L$-design. With a given boundary condition $B_i(\bm a) = \hat{B}_i(\bm a) ~ \forall \bm a\in\partial\ball_R(\brho)$ with $\hat{B}_i\in L^2{\left(\ball_R(\brho)\right)}$ the field can be formulated as a solid harmonic expansion
    \begin{align*}
        \bm B^{\brho}(\bm a) = 
        \sum_{l=0}^L\sum_{m=-l}^l \bgam_{l,m}(\brho) Z_l^m(\bm a)
        \qquad \forall \bm a\in\ball_R(\bm 0)
    \end{align*}
    with coefficients $\bgam_{l,m} = (\gamma_{l,m}^x, \gamma_{l,m}^y, \gamma_{l,m}^z)$ calculated by
    \begin{align*}
        \gamma_{l,m}^i(\brho)
        = \frac{1}{R^l}\frac{2l+1}{N} \sum_{k=1}^N \hat{B_i}{\left(R\bm a_k + \brho\right)}Z_l^m(\bm a_k).
    \end{align*}

\end{proposition}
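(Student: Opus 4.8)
The plan is to reduce the vector-valued statement to three independent scalar applications of the machinery already developed in Section~2. First I would invoke Lemma~\ref{lem:MagnField_Laplace}: since $\ball_R(\brho)\subseteq\Omega$ and no current flows in $\Omega$, each component $B_i$ with $i\in\{x,y,z\}$ satisfies $\Delta B_i = 0$ on $\overset{\circ}{\ball}_R(\brho)$. Together with the prescribed boundary values $B_i|_{\partial\ball_R(\brho)} = \hat B_i$, this means every $B_i$ is a solution of Laplace's equation with Dirichlet boundary condition~\eqref{eq:Th_LaplaceDirichlet} on $\ball_R(\brho)$. Because $\bm B$, and hence each $B_i$, is a polynomial of degree at most $L$, it is in particular $\cC^2$ on $\ball_R(\brho)$, and its restriction $\hat B_i$ to the compact surface $\partial\ball_R(\brho)$ is continuous, therefore contained in $L^2(\partial\ball_R(\brho))$. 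Thus the hypotheses of the propositions I want to apply are met componentwise.

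Next I would apply Proposition~\ref{prop:SHE_arbBall} separately to each scalar field $B_i$. For each $i$ this produces a solid harmonic expansion $B_i^{\brho}(\bm a) = \sum_{l=0}^\infty\sum_{m=-l}^l \gamma_{l,m}^i(\brho)\,Z_l^m(\bm a)$ on $\ball_R(\bm 0)$, with $\gamma_{l,m}^i(\brho) = R^{-l}\gamma_{l,m}^i(\brho,R)$ and $\gamma_{l,m}^i(\brho,R)$ given by the orthogonal projection of $\hat B_i$ in~\eqref{eq:Th_clmR}. Since each $B_i$ is a polynomial of degree at most $L$, the expansion truncates at $l=L$: the solid harmonics $Z_l^m$ are linearly independent homogeneous polynomials of degree $l$, so none with $l>L$ can appear in a polynomial of degree $\le L$ — this is precisely the truncation argument used at the end of the proof of Proposition~\ref{Prop:CoeffsTDesign}. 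Collecting the three scalar expansions into the vector $\bgam_{l,m}(\brho) = (\gamma_{l,m}^x(\brho),\gamma_{l,m}^y(\brho),\gamma_{l,m}^z(\brho))$ yields the claimed representation $\bm B^{\brho}(\bm a) = \sum_{l=0}^L\sum_{m=-l}^l \bgam_{l,m}(\brho)\,Z_l^m(\bm a)$.

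Finally, for the explicit coefficient formula I would apply Proposition~\ref{Prop:CoeffsTDesign} to each $B_i$: for $l\le L$ the product $\hat B_i\,Z_l^m$ restricted to $\IS^2$ has degree at most $2L$, so a $2L$-design $\{\bm a_k\}_{k=1,\dots,N}$ integrates it exactly, giving $\gamma_{l,m}^i(\brho) = \frac{2l+1}{N R^l}\sum_{k=1}^N \hat B_i(R\bm a_k+\brho)\,Z_l^m(\bm a_k)$. Stacking the components once more gives the stated formula for $\bgam_{l,m}$.

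Since every ingredient has already been proved in Section~2, there is no genuine obstacle here; the work is entirely in the bookkeeping of transferring the scalar hypotheses ($\cC^2$ regularity, $L^2$ boundary data, harmonicity, and the polynomial degree bound) to each component, which is exactly what Lemma~\ref{lem:MagnField_Laplace} secures. The only point deserving a line of justification is why the vector hypothesis ``polynomial of degree $L$'' suffices for the componentwise truncation, and this is immediate because each component is then a polynomial of degree at most $L$.
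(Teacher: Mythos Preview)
Your proposal is correct and follows essentially the same route as the paper: invoke Lemma~\ref{lem:MagnField_Laplace} to obtain harmonicity of each component, then apply the scalar machinery (Propositions~\ref{prop:SHE_arbBall} and~\ref{Prop:CoeffsTDesign}) componentwise. The paper compresses this into a single sentence citing Proposition~\ref{Prop:CoeffsTDesign}, whereas you spell out the truncation and the verification of hypotheses in more detail, but the logical content is the same.
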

    
\begin{proof}
    Due to lemma~\ref{lem:MagnField_Laplace} the magnetic field $\bm B$ and the boundary $\hat{\bm B}$ fulfill all assumptions of proposition~\ref{Prop:CoeffsTDesign}, which implies both equations.
\end{proof}

    The expansion of the magnetic field $\bm B$ at the expansion point $\brho$ with coefficients coefficients $\bgam^x(\brho)$, $\bgam^y(\brho)$, and $\bgam^z(\brho)$ characterizes the magnetic field locally in $x$-, $y$-, and $z$-direction, respectively. Similar to Taylor series, the expansion can be written as polynomial where the polynomial degree increases with the index $l$. I.e. $\bgam_{0,0}$ describes the constant part of the magnetic field, while the coefficients $\bgam_{1,m}$ contain the information about its linear behavior, $\bgam_{1,-1}$ describes the behavior in $y$-direction, $\bgam_{1,0}$ in $z$-direction, and $\bgam_{1,1}$ in $x$-direction. The coefficients for $l>1$ characterize the nonlinear behavior of the magnetic field.

\subsection{Magnetic Fields in MPI}
In \ac{MPI}, two main magnetic fields are used for signal encoding and generation: a linear selection field $\signatur{\bm B_\SF}{\IR^3}{\IR^3}$ and dynamic drive fields $\signatur{\bm B_\DF^i}{\IR^3\times\IR}{\IR^3}$. Each dynamic drive field can be separated into the constant coil sensitivity $\signatur{\bm p_\DF^i}{\IR^3}{\IR^3}$ and the sinusoidal current $\signatur{I^i}{\IR}{\IR}$ such that $\bm B_\DF^i(\bm r,t) = I^i(t)\bm p_\DF^i(\bm r)$. In our setup, the three orthogonal drive-field coils are also the receive coils so that $\bm p_\DF^k = \bm p^k_\textup{r}$ with $k\in\set{1,2,3}$ in~\eqref{eq:MPISignalEquation}. In the multi-patch setting described in the problem statement additional patch-wise constant focus fields $\signatur{\bm B_\FF^i}{\IR^3}{\IR^3}$ are applied to obtain a larger \ac{FOV}. More general information on the imaging principles of \ac{MPI} and the setup of an \ac{MPI} scanner can be found in~\cite{Knopp2012b, Knopp2015a} while the mathematical background of \ac{MPI} is described in~\cite{Kluth2018}.

\begin{table}[ht]
        \renewcommand{\arraystretch}{1.2}
        \centering
        \begin{tabular}{|c||c|c|c|}
            \hline
            & \textbf{Selection Field} & \textbf{Focus Field} & \textbf{Drive Field} \\
            \hline\hline
            $l=0$ & $\bgam_{0,0} = \bm 0$ 
            & $\gamma^i_{0,0} = f^i$ 
            & $\gamma^i_{0,0} = d^i$ \\
            \hdashline
            & $\gamma^y_{1,-1} = -0.5g$ 
            & & \\
            $l=1$ & $\gamma^z_{1,0} = g$ 
            & $\bgam_{1,m} = \bm 0$ 
            & $\bgam_{1,m} = \bm 0$\\
            & $\gamma^x_{1,1} = -0.5g$ 
            & & \\
            \hline
        \end{tabular}
        \caption{Coefficients of the three different ideal magnetic fields in MPI in \si{\tesla\per\meter\tothe{\mathnormal{l}}}.}
        \label{tab:IdealCoeffs}
        \renewcommand{\arraystretch}{1.0}
\end{table}

The solid coefficients at the expansion point $\bm 0$ of ideal selection, focus and drive fields are listed in Table~\ref{tab:IdealCoeffs}. All coefficients not mentioned in the table are zero. Since the selection field is a linear field only the coefficients for $l=1$ are nonzero and describe the gradient strength of $g\in [\SI{0}{\tesla\per\m},\SI{2.5}{\tesla\per\m}]$ in $z$-direction and $-0.5g$ in $x$- and $y$-direction. Meanwhile only the constant coefficients for $l=0$ of the ideal drive and focus field are nonzero. The focus fields are characterized by the shift $f^x,f^y\in [\SI{-17}{\milli\tesla},\SI{17}{\milli\tesla}]$ and $f^z\in [\SI{-42}{\milli\tesla},\SI{42}{\milli\tesla}]$, while the drive field is characterized by its amplitude $d^i\in [\SI{0}{\milli\tesla},\SI{14}{\milli\tesla}]$ with $d^i = \max_t(I^i(t)) \bm p_\DF^i(\bm 0)$.

\subsection{Measurement Setup}\label{ssec:MeasurementSetup}
Measuring magnetic fields can be done using different devices like Hall-effect sensors, SQUID sensors or induction sensors~\cite{Tumanski2013}. Gaussmeters with a 3-axis Hall sensor are very accurate and therefore widely used for magnetic field measurements~\cite{Renella2017overview, Renella2019}. Hence, we use a 3-channel gaussmeter with a three-axis high-sensitivity Hall-effect sensor from Lake\,Shore (model 460, Westerville, USA)~\cite{LakeShore2014} for the measurement of the static fields of the MPI scanner. Its accuracy is the sum of the reading error of about $\pm\SI{0.10}{\percent}$ and $\pm\SI{0.005}{\percent}$ of the chosen range \cite{LakeShore2014}. The used range of $\pm\SI{0.3}{\tesla}$ results in a maximum reading error of $\pm\SI{300}{\micro\tesla}$ and a range error of $\pm\SI{15}{\micro\tesla}$. The coil sensitivity of the dynamic drive field is measured with a three-axis coil sensor, which is connected to an \ac{ADC} for a digitization of the induced voltage signal~\cite{Thieben2022}. Each coil has a radius of \SI{2.5}{\mm} and an accuracy of about $\pm\SI{8}{\percent}$. The magnetic fields are measured in our preclinical MPI system 25/20FF (Bruker BioSpin MRI GmbH, Ettlingen, Germany), which is equipped with a three-axis Cartesian robot (isel Germany AG, Eichenzell, Germany) for an easy and accurate positioning of the measurement devices. The robot has a repetition accuracy in each direction of $\pm\SI{0.02}{\milli\metre}$ and an angle error of $\pm\SI{5}{\percent}$ for a motor step angle of \SI{1.8}{\degree}. Taking the accuracy of the gaussmeter or the coil sensor into account measurement errors due to mislocation of the robot are small and can be neglected. All fields are measured at the $36$ points of an $8$-design~\cite{Hardin8Design}, which are approached by the robot. In Fig.~\ref{fig:measurementSetup} an exemplary $3$-design with $6$ points is shown. According to the \SI{12}{\cm} diameter of the scanner bore, the points are rescaled to obtain a sphere with radius $\SI{42}{\milli\meter}$. The spheres were chosen as large as possible while keeping a safety margin that prevents collision of probe and scanner bore. The center $\brho$ is chosen near the \ac{FFP} of the selection field. At every point all three field directions are measured simultaneously.

For each of the magnetic fields we perform multiple measurements with different field strengths. Note, that the following field values describe the input parameters at the scanner, which ideally should result in the ideal magnetic fields described in the last section. First, the selection field at $10$ different gradient strengths of \SIrange{0.25}{2.5}{\tesla\per\m} with a step size of \SI{0.25}{\tesla\per\meter} is measured with the Hall sensor. The focus fields are measured with the Hall sensor as well with $11$ different shifts of \SIrange{-17}{17}{\milli\tesla} and a step size of \SI{3.4}{\milli\tesla} for the fields in $x$- and $y$-direction and \SIrange{-42}{42}{\milli\tesla} with a step size of \SI{8.4}{\milli\tesla} in $z$-direction. For the drive field measured with the coil sensor $4$ different amplitudes of \SIrange{6}{12}{\milli\tesla} with a step size of \SI{2}{\milli\tesla} for all three directions is set, as this range is most commonly used in our experiments. 

\begin{figure}[ht]
    \centering
    \includegraphics[width=0.8\textwidth]{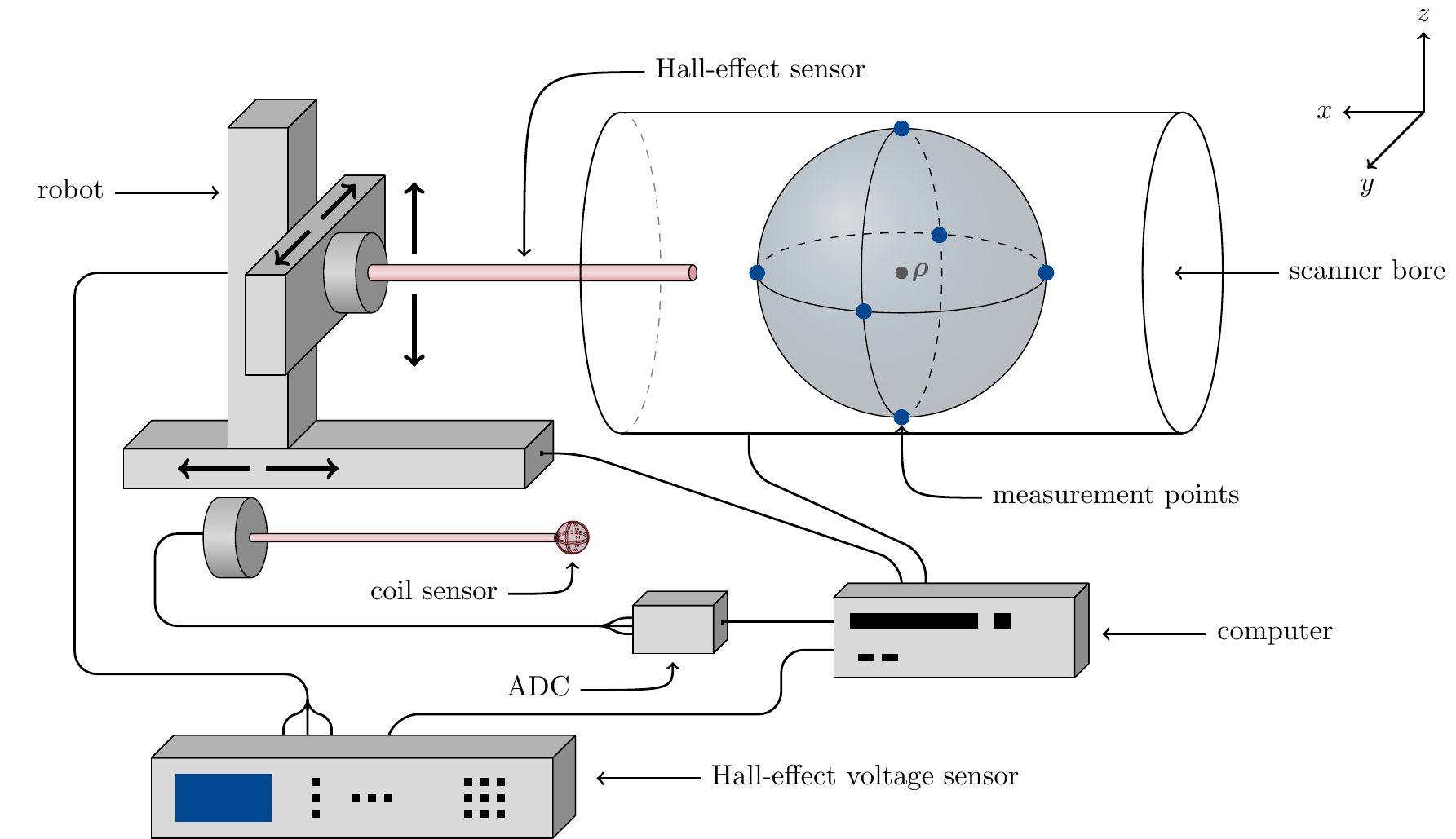}
    \caption{Measurement setup. The Hall-effect sensor of the gaussmeter is mounted on a three-axis Cartesian robot, which moves it to the chosen t-design positions inside the scanner bore. The voltage sensor of the gaussmeter transfers the measured data to the computer, which controls the robot movements and the settings of the \ac{MPI} scanner.}
    \label{fig:measurementSetup}
\end{figure}

\subsection{Unique Representation}
While the single coils of the coil sensor are all centered around the same point, the three orthogonal detectors inside the Hall sensor are slightly shifted away from the center of the rod. Therefore, each detector measures the field on a slightly shifted sphere as it is shown in Fig.~\ref{fig:ShiftHallSensors}. All detectors are located \SI{1.8}{\mm} behind the tip of the rod and the $x$- and $y$-detector are additionally shifted by \SI{2.08}{\mm} outward from the center. This results in three expansions at slightly different expansion points $\brho_i$ for each direction. Using the translation from theorem~\ref{Th:Translation} the coefficients can be shifted into a common coordinate system centered at the tip of the rod $\brho$.

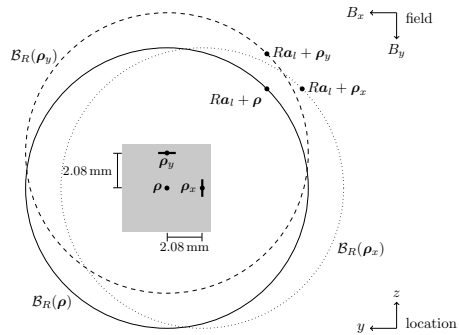
\begin{figure}
    \centering
    {\resizebox{6cm}{!}{\begin{tikzpicture}[scale=0.42]

\fill[color=gray!42] (-2.5,-2.5) rectangle (2.5,2.5);
\draw (0,0)circle(8);
\fill (0,0)circle(4pt) node[left] {$\brho$};
\node at (-6.5,-6.5) {$\mathcal{B}_R(\brho)$};

\draw[line width = 1.5pt] (2,-0.5) -- (2,0.5);
\fill (2,0)circle(4pt) node[left] {$\brho_x$};
\draw[dotted] (2,0)circle(8);
\node at (11,-3.5) {$\mathcal{B}_R(\brho_x)$};
\draw[line width = 1.5pt] (-0.5,2) -- (0.5,2); 
\fill (0,2)circle(4pt) node[below] {$\brho_y$};
\draw[dashed] (0,2)circle(8);
\node at (-7.5,7.5) {$\mathcal{B}_R(\brho_y)$};

\fill (5.657,5.657)circle(4pt) node[below left] {$R\bm a_l+\brho$};
\fill (7.657,5.657)circle(4pt) node[right] {$R\bm a_l+\brho_x$};
\fill (5.657,7.657)circle(4pt) node[right] {$R\bm a_l+\brho_y$};

\begin{scope}[xshift=13cm,yshift=10cm]
    \draw[->] (0,0) -- (0,-1.5) node[below] {$B_y$};
    \draw[->] (0,0) -- (-1.5,0) node[left] {$B_x$};
    \node[right,anchor=north west, inner sep=0pt] at (0.5,0) {field};
\end{scope}

\begin{scope}[xshift=13cm,yshift=-8cm]
    \draw[->] (0,0) -- (0,1.5) node[above] {$z$};
    \draw[->] (0,0) -- (-1.5,0) node[left] {$y$};
    \node[right,anchor=south west, inner sep=0pt] at (0.5,0) {location};
\end{scope}

\draw[|-|] (0,-2.8) --node[below] {\small \SI{2.08}{\mm}} (2,-2.8);
\draw[|-|] (-2.8,0) --node[left] {\small \SI{2.08}{\mm}} (-2.8,2);

\end{tikzpicture}}}
    \caption{The 3-axis Hall-effect sensor has three individual sensors for $x$-, $y$- and $z$-direction, respectively. The sensors for $x$- and $y$-direction ($\brho_x$ and $\brho_y$) are shifted off center inside the sensor rod (gray square). For each sensor the corresponding sphere $\ball_R(\brho_x)$ and $\ball_R(\brho_y)$ on which the magnetic field is measured are shown, as well as the sphere $\ball_R(\brho)$ at the tip of the rod. Additionally, the spatial coordinate system of the MPI scanner is shown on the bottom right. Above, the magnetic field coordinate system is displayed as it is given by the detector orientation of the sensor.}
    \label{fig:ShiftHallSensors}
\end{figure}

In MPI, the main selection field has a unique \ac{FFP}. This can be exploited by shifting the coefficients of the magnetic field expanded at the tip of the rod further into the \ac{FFP}, which we determine from the expansions using Newton's method. With this, the coefficients are independent from both, the measurement device itself and the entire measurement setup and do only depend on the MPI scanner specific fields. Using the \ac{FFP} of the selection field, we can also shift the coefficients of the drive and focus fields into this point. Altogether we obtain unique coefficients representing the magnetic fields of our MPI scanner.

\subsection{Implementation}
All numerical methods described so far are implemented in the programming language Julia (version 1.8)~\cite{bezanson2017julia} in the open-source software package \texttt{SphericalHarmonicExpansions.jl} (version 0.1)~\cite{SphericalHarmonicExpansions}. The package provides methods for storage and handling of the coefficients of spherical or solid expansions, an efficient quadrature based on t-designs to calculate spherical or solid coefficients, a method to translate coefficients to a different expansion point and methods for fast numerical evaluation of the expansions in Cartesian coordinates. Furthermore, a collection of spherical t-designs can be obtained via the \texttt{MPIFiles.jl} package (version 0.12)~\cite{Knopp2019}. An example script, which shows how to obtain the expansion of a \SI{2}{\tesla\per\meter} selection field from the measurements described above is provided at \url{https://github.com/IBIResearch/SphericalHarmonicExpansionOfMagneticFields}.

\subsection{Error Analysis}

The measurement instruments always have a tiny statistical independent measurement error $\delta > 0$, which is propagated to the coefficients and the final magnetic field using the laws of error propagation~\cite{Birge1939, Ferrero2006}. In our measurement setup there are two contributors to error. I.e. the uncertainty in the magnetic field measurements and errors in the positioning of the Hall-effect sensor. A systematic positional error may result from a non-ideal robot mount of the measurement sensors or bending of the rods to which those are attached. Such a systematic error effectively results in a shift of the coordinate system, which is compensated by the shift into the \ac{FFP} and hence can be neglected. Only non-systematic mislocations need to be accounted for, which are so small in our setup that they are neglected.

\begin{corollary}\label{Cor:ErrorProp}
    Let $\bgam^i(\brho)$ be the solid coefficients calculated with proposition~\ref{Prop:CoeffsTDesign} and $\bm B^\brho$ the resulting magnetic field calculated with proposition~\ref{Prop:MagnField_SphExp}. Additionally, we have the independent observational errors $\delta_k\!\left(R\bm a_k + \brho\right)$ of the measured boundary condition $\hat{B_i}{\left(R\bm a_k + \brho\right)}$.
    \renewcommand{\labelenumi}{\textit{\roman{enumi})}}
    \begin{enumerate}
        \item  The standard deviation for the propagated error of the coefficient $\gamma_{l,m}^i(\brho)$ can be obtained by
        \begin{align*}
            \epsilon_{l,m}^i(\brho) 
            = \frac{2l+1}{N R^l} \sqrt{\sum_{k=1}^N \left(\delta_k\!\left(R\bm a_k + \brho\right)Z_l^m(\bm a_k)\right)^2}.
        \end{align*}
        \item For each component of the magnetic field $B_i^\brho$, the standard deviation for the propagated error at a position $\bm a\in \ball_R(\brho)$ can be calculated by
        \begin{align*}
            \hat{\epsilon}_i^\brho(\bm a) 
            = \sqrt{ \sum_{k=1}^N 
            \left(\delta_k(R\bm a_k + \brho) \; \cS_L\!\left(\frac{2l+1}{NR^l} Z_l^m(R\bm a_k+\brho)\right)\!(\bm a)\right)^{\negthickspace 2}}.
        \end{align*}
    \end{enumerate}
\end{corollary}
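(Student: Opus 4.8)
The plan is to notice that, with $l$, $m$, $i$, $\brho$ fixed (and, in part~\textit{ii)}, also the evaluation point $\bm a$ fixed), both $\gamma_{l,m}^i(\brho)$ and $B_i^\brho(\bm a)$ are \emph{exact} linear combinations of the $N$ independently measured boundary values $\hat{B_i}(R\bm a_k+\brho)$, with known, noise-free weights. The whole corollary then follows from the elementary Gaussian law of error propagation: if $y=\sum_{k=1}^N c_k x_k$ with the $x_k$ stochastically independent and unbiased with standard deviations $\sigma_k$, then $y$ is unbiased with variance $\sum_{k=1}^N c_k^2\sigma_k^2$, hence standard deviation $\sqrt{\sum_{k=1}^N c_k^2\sigma_k^2}$. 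Since the dependence is linear this is exact, with no linearization involved, and the independence of the observational errors $\delta_k(R\bm a_k+\brho)$ is precisely what lets the variances add.

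For part~\textit{i)} I would start from the quadrature representation of proposition~\ref{Prop:MagnField_SphExp} (i.e.\ proposition~\ref{Prop:CoeffsTDesign} applied componentwise),
\[
\gamma_{l,m}^i(\brho)=\sum_{k=1}^N c_k\,\hat{B_i}(R\bm a_k+\brho),\qquad c_k:=\frac{2l+1}{N R^l}\,Z_l^m(\bm a_k),
\]
read off the weights $c_k$, and substitute them into $\epsilon_{l,m}^i(\brho)=\sqrt{\sum_{k=1}^N c_k^2\,\bigl(\delta_k(R\bm a_k+\brho)\bigr)^2}$. As the factor $\tfrac{2l+1}{NR^l}$ is independent of $k$, pulling it out of the square root gives exactly the stated formula.

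For part~\textit{ii)} I would expand $B_i^\brho(\bm a)=\cS_L(\bgam^i(\brho))(\bm a)=\sum_{l=0}^L\sum_{m=-l}^l\gamma_{l,m}^i(\brho)\,Z_l^m(\bm a)$, insert the quadrature formula for each $\gamma_{l,m}^i(\brho)$, and interchange the two finite sums (over $(l,m)$ with $l\le L$, and over $k=1,\dots,N$):
\[
B_i^\brho(\bm a)=\sum_{k=1}^N\hat{B_i}(R\bm a_k+\brho)\,w_k(\bm a),\qquad w_k(\bm a):=\sum_{l=0}^L\sum_{m=-l}^l\frac{2l+1}{NR^l}\,Z_l^m(\bm a_k)\,Z_l^m(\bm a).
\]
Here $w_k(\bm a)$ is $\cS_L$ evaluated at $\bm a$ on the coefficient vector with $(l,m)$-entry $\tfrac{2l+1}{NR^l}Z_l^m(\bm a_k)$ --- precisely the combined weight appearing in the claimed formula. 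A second application of the linear propagation law, now with inputs $\hat{B_i}(R\bm a_k+\brho)$ and weights $w_k(\bm a)$, yields $\hat\epsilon_i^\brho(\bm a)=\sqrt{\sum_{k=1}^N\bigl(\delta_k(R\bm a_k+\brho)\,w_k(\bm a)\bigr)^2}$.

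I do not expect a genuine obstacle here; the argument is bookkeeping around one standard fact. The three points worth a sentence of care are: (a) that linearity makes the propagation formula exact rather than merely first order; (b) the interchange of the two summations in part~\textit{ii)}, which is unconditional because both are finite (degrees $l\le L$, and $N$ nodes); and (c) matching the inner weight $w_k(\bm a)$ to the $\cS_L(\cdots)(\bm a)$ expression in the statement. I would also record the modelling assumption the estimates rest on: each $\hat{B_i}(R\bm a_k+\brho)$ is treated as an unbiased measurement with uncorrelated noise of standard deviation $\delta_k(R\bm a_k+\brho)$, so that $\epsilon_{l,m}^i$ and $\hat\epsilon_i^\brho$ are standard deviations of the (unbiased) reconstructed quantities; systematic mislocations are absorbed into the shift to the \ac{FFP}, as discussed above, and hence fall outside this analysis.
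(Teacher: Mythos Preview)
Your proposal is correct. The paper does not supply a proof of this corollary at all; it states the result, cites the general references \cite{Birge1939,Ferrero2006} for the laws of error propagation, and moves on. Your write-up is therefore more detailed than what the paper offers, and the linearity-plus-independence argument you give is exactly the intended one.

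One small point of care in matching part~\textit{ii)} to the displayed formula: your derivation (correctly) produces the weight
\[
w_k(\bm a)=\sum_{l=0}^L\sum_{m=-l}^l\frac{2l+1}{NR^l}\,Z_l^m(\bm a_k)\,Z_l^m(\bm a)
=\cS_L\!\left(\tfrac{2l+1}{NR^l}Z_l^m(\bm a_k)\right)\!(\bm a),
\]
with the solid harmonic evaluated at the unit-sphere node $\bm a_k$, whereas the statement as printed has $Z_l^m(R\bm a_k+\brho)$. The quadrature formula of proposition~\ref{Prop:CoeffsTDesign} uses $Z_l^m(\bm a_k)$, so your version is the one the computation actually yields; the argument $R\bm a_k+\brho$ in the displayed formula appears to be a slip in the statement rather than something you should try to reproduce.
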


\begin{remark}
    Propagating the error of the coefficients through the translation mapping can be done analogously since the translation mapping is linear in the coefficients.
\end{remark}

For error analysis we compare the field values provided by the truncated expansion to the measured ones. In our example study, we use the field measurements obtained at the spherical t-design positions which we also used to create the truncated expansion. In a typical application scenario, independent measurements should be used for error estimation. For the selection field, this is done by
\begin{align}\label{eq:MeasurementError}
    \zeta_i(k) = \frac{B_i^\brho(R \bm a_k) - \hat{B}_i(R\bm a_k + \brho)}{R g_i},
\end{align}
which compares the measured field $\hat{B}_i$ at a scaled t-design position $R\bm a_k + \brho$ with the calculated field $B_i^\brho$ at the same position. The difference is normalized to the scaled gradient strength $g_i$ of the considered direction $i$. The propagated error undergoes the same normalization $\hat{\zeta}_i(k) = \frac{\hat{\epsilon}_i^\brho(R\bm a_k+\brho)}{R g_i}$, which allows to assess the approximation quality of the truncated expansions.

\section{Results}\label{sec:Results}


Exemplary, we examine the results of a \SI{2}{\tesla\per\meter} selection field. In the left part of Table~\ref{tab:initialCoeffs}, the initial coefficients calculated from the measurement without any post processing are listed, while in the right part, the processed coefficients are listed. Postprocessing consists of normalizing with the radius of the measured sphere, correcting the shifts of the Hall sensors and shifting into the \ac{FFP} calculated using the initial coefficients. Due to the shift into the FFP, the coefficients for $[l,m] = [0,0]$ are equal to zero up to floating point precision. Now, the gradient, which slightly deviates from the ideal gradient (cf. Table~\ref{tab:IdealCoeffs}), can be read directly from the coefficients for $l=1$. All other non-zero coefficients can be attributed to either measurement error or imperfections of the selection field.

\renewcommand{\arraystretch}{1.2}
\pgfplotstableread[col sep=semicolon,]{tikz/Coeffs/data/CoeffsSF_initial-final.csv}\datatable
\begin{table}[ht]
    \centering
    \pgfplotstabletypeset[columns={num, xInitial, yInitial, zInitial,
                                  xFinal, yFinal, zFinal},
                          col sep=semicolon,
                          brackets/.style={%
                            postproc cell content/.append style={/pgfplots/table/@cell content/.add={\relax[}{]}},
                          },
                          every head row/.style={
                            before row={
                                \hline
                                &
                                \multicolumn{3}{c||}{\textbf{Initial Coefficients}} & \multicolumn{3}{c|}{\textbf{Processed Coefficients}}\\
                                },
                            after row=\hline\hline,},
                          every last row/.style={
                            after row=\hline},
                          every column/.style={
                            column type=|c},
                          columns/num/.style={
                            verb string type,
                            brackets,
                            column name={$[l,m]$},
                            column type=|c|},
                          columns/xInitial/.style={
                            column name=$x$},
                          columns/yInitial/.style={
                            column name=$y$},
                          columns/zInitial/.style={
                            column name=$z$,
                            column type=|c|},
                          columns/xFinal/.style={
                            column name=$x$},
                          columns/yFinal/.style={
                            column name=$y$},
                          columns/zFinal/.style={
                            column name=$z$,
                            column type=|c|},
                          my row iterator/.style={every row no #1/.style={
                            before row={
                              \hdashline
                              }
                            }
                          },
                          my row iterator/.list={1,4,9},
                          ]
        {\datatable}
    \caption{Comparison of the initial coefficients (in \si{\tesla}, left) and the normalized processed coefficients (in \si{\tesla\per\meter\tothe{\mathnormal{l}}}, right) of a \SI{2}{\tesla\per\meter} selection field.}
    \label{tab:initialCoeffs}
\end{table}
\renewcommand{\arraystretch}{1.0}

\subsection{Magnetic fields in MPI}

\subsubsection{Static Fields}

Using translation of the coefficients enables comparison of the different magnetic fields applied in \ac{MPI}. In Fig.~\ref{fig:StaticFields}, the static magnetic fields of the central and the lower left patch of Fig.~\ref{fig:MPISetup} are shown. In the first row, the coefficients of the selection field with a gradient strength of \SI{2}{\tesla\per\meter} at its FFP $\bxi_\textup{c}$ and at the center point of the lower left patch $\bxi_\textup{s}$ are shown. While the coefficients at $\bxi_\textup{c}$ do not feature any significant imperfections, in the shifted coefficients some imperfections for $l \geq 1$ occur. Since $\bxi_\textup{s}$ is not the FFP of the selection field, $\bgam_{0,0}^\SF(\bxi_\textup{s},R_\textup{s})$ are nonzero and contain information about the offset field. Using an additional focus field with the same offset field but opposite sign, the offset field can be canceled and the FFP is shifted into $\bxi_\textup{s}$. This focus field is shown in the second row with an offset field of \SI{-24}{\milli\tesla} in $x$- and \SI{24}{\milli\tesla} in $z$-direction. At both positions some imperfections occur but they are slightly higher at the off-center position $\bxi_\textup{s}$. The combined selection and focus field is visualized in the last row. The axes of the field plot on the right are shifted due to the translation of the coefficients to the \ac{FFP} $\bxi_\textup{s}$. The FFP therefore has the coordinates $(0,0)$, as it is in the top field plot. Due to the shift into the FFP, the coefficients of the initial selection field and the combined field can be directly compared. It can be observed that the combined field has much more imperfections than the initial selection field, starting already from $l=1$.

\begin{figure}
    \centering
    \includegraphics[width=\textwidth]{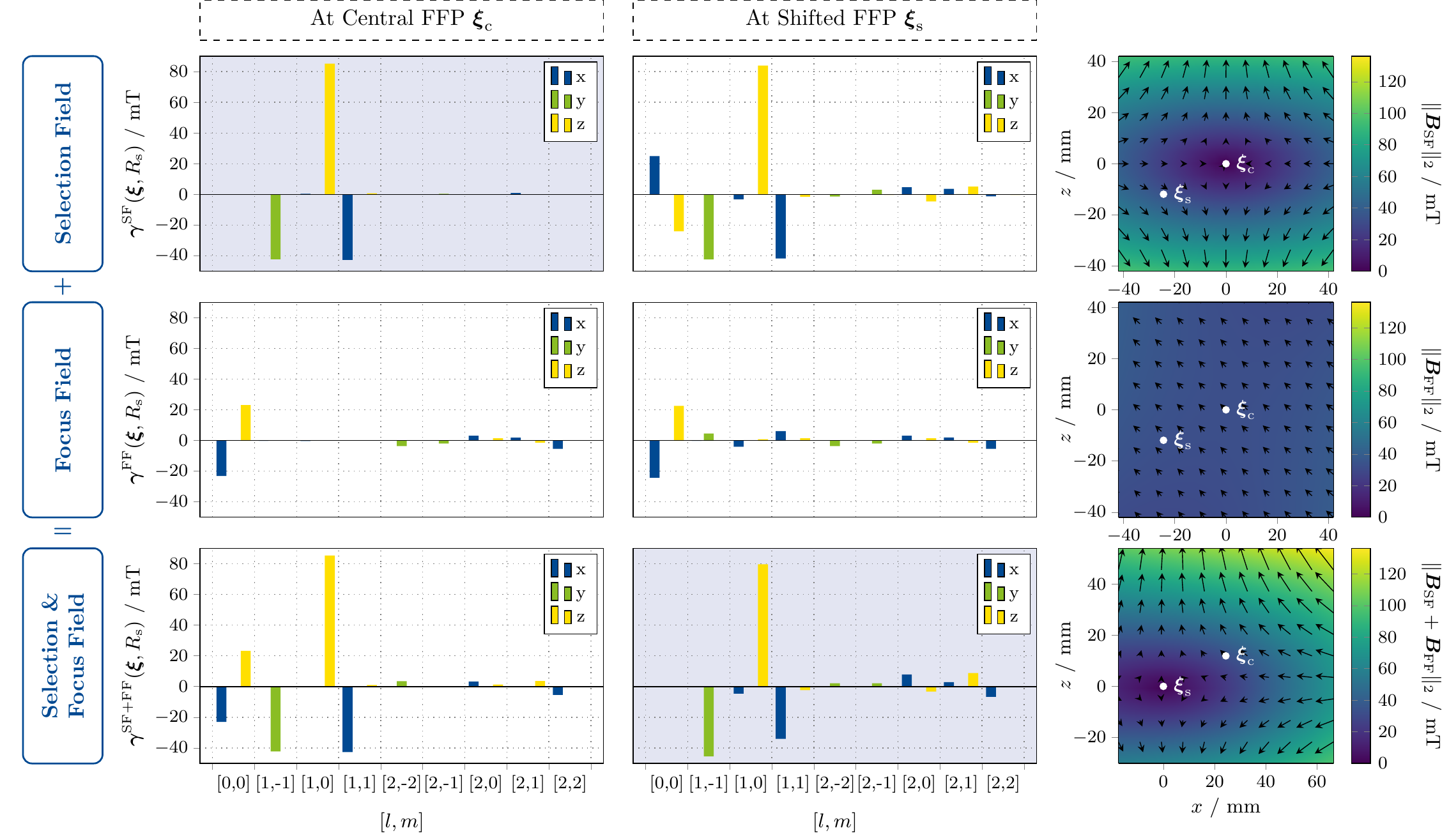}
    \caption{Solid harmonic analysis of the static fields in MPI, i.e. the selection and focus fields. The first row shows a selection field with a gradient strength of \SI{2}{\tesla\per\meter}. In the first column, the coefficients at the FFP of the selection field $\bxi_\textup{c}$ are shown, while in the second column the coefficients at another point $\bxi_\textup{s}$ are shown. Both points are marked in the field plot on the right.
    In the second row, a focus field of \SI{-24}{\milli\tesla} in $x$- and \SI{24}{\milli\tesla} in $z$-direction at both positions is shown. This additional field is required to shift the FFP from $\bxi_\textup{c}$ to $\bxi_\textup{s}$. The combined selection and focus field is shown in the last row. In an ideal MPI system the coefficients with light blue background would be identical.}
    \label{fig:StaticFields}
\end{figure}

The coefficients do not only enable comparison, they also allow for calculation of the real gradient strength and focus field shifts differing from the input parameters given in section~\ref{ssec:MeasurementSetup}. In case of the setup of Fig.~\ref{fig:StaticFields}, the real gradient strength of the selection field in its \ac{FFP} is \SIlist{-1.02;-1.01;2.03}{\tesla\per\meter} in $x$-, $y$-, and $z$-direction. Meanwhile, the real focus field shifts are \SIlist{-22.96;23.28}{\milli\tesla} in $x$- and $z$-direction, respectively.

\subsubsection{Dynamic Fields}

The \SI{12}{\milli\tesla} dynamic fields of our MPI scanner are shown in Fig.~\ref{fig:DriveFieldsPatches}. As for the static fields, the coefficients at the FFP of the selection field $\bxi_\textup{c}$ and at the shifted FFP $\bxi_\textup{s}$ are shown in the left columns while the $x$-, $y$-, and $z$-drive fields are shown in the three rows. Overall, the coefficients decrease as $l$ increases, which justifies truncating the expansion at $L=4$. It can be observed that even in the center imperfections especially for $l > 1$ occur, which are more severe for the $y$- and $z$-drive field. The coefficients for $l=0$ show that the drive-field amplitudes \SIlist{12.35;12.29;12.15}{\milli\tesla} for the $x$-, $y$-, and $z$-drive field deviate from the \SI{12}{\milli\tesla} input parameter. This is because the drive-field coils are located closer to the scanner bore, than the selection- and focus-field coils. Comparing the coefficients at the central FFP and at the shifted FFP, the imperfections of the drive fields increase. Here, imperfections also arise for $l=0$, which are especially visible for the $z$-drive field. In the shifted FFP, the constant part of the $z$-drive field is not perfectly aligned in $z$-direction but also points slightly in the $x$-direction. In combination with the imperfections of the selection and focus field, this leads to the distorted trajectory of the lower left patch in Fig.~\ref{fig:MPISetup}. The imperfections also manifest in the field plot on the right, where for each drive field a representative plane is shown.

\begin{figure}
    \centering
    \includegraphics[width=\textwidth]{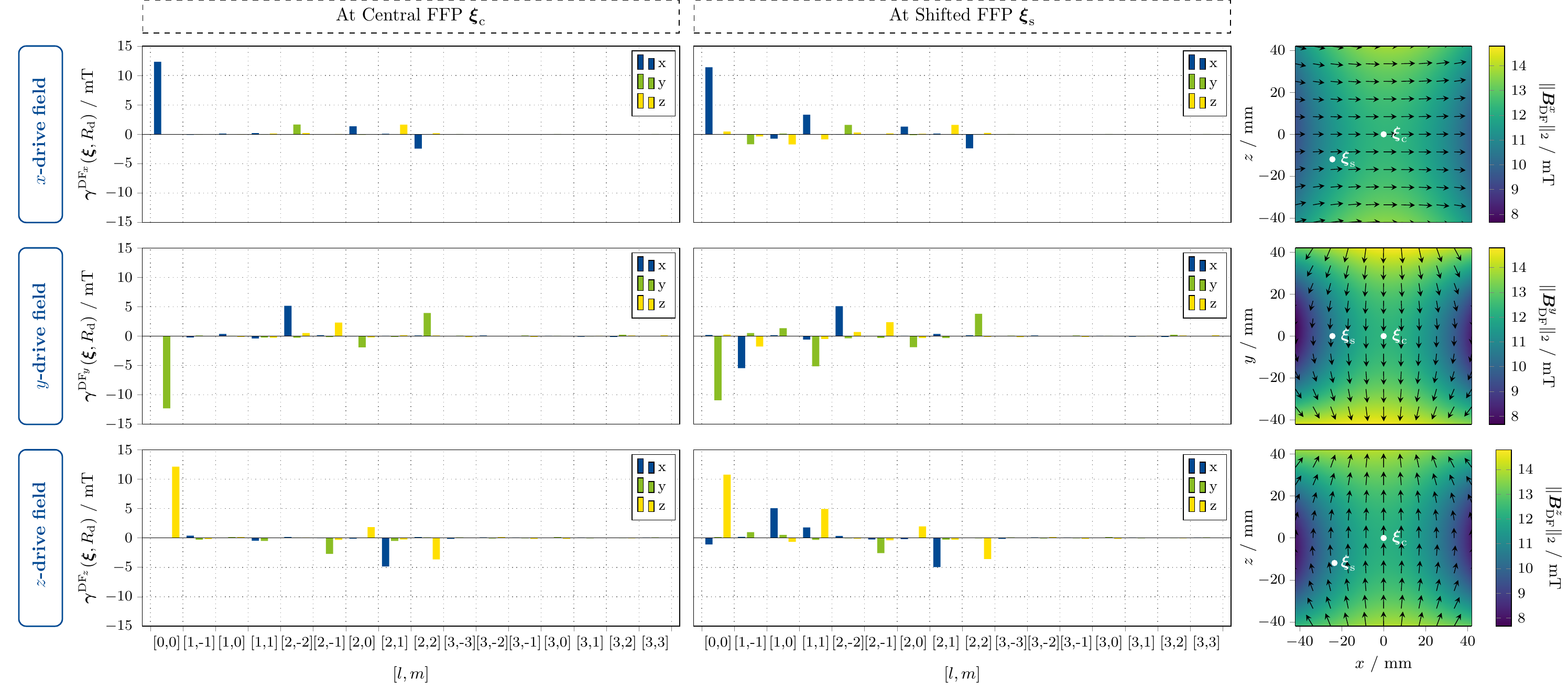}
    \caption{Solid harmonic analysis of the dynamic fields in MPI, i.e. the drive fields. Three drive fields in $x$-, $y$-, and $z$-direction with \SI{12}{\milli\tesla} amplitude are shown in each row. In the left columns, the coefficients up to $L=3$ at the FFPs of the selection fields of Fig.~\ref{fig:StaticFields} are visualized, while on the right, the fields in the $xz$- respectively $xy$-plane are shown.}
    \label{fig:DriveFieldsPatches}
\end{figure}

\subsection{Error Analysis}

A directional comparison of the field values provided by the truncated expansion to the measured ones at different gradient strengths shows an error (standard deviation) in the range of \SIrange{4}{100}{\micro\tesla}. As shown in Fig.~\ref{fig:Error}, this error increases with the gradient strength, is approximately the same in $y$- and $z$-direction, and approximately a factor of two smaller in $x$-direction. If we normalize the error as defined in eq.~\eqref{eq:MeasurementError}, one observes errors in the range of \numrange{4e-4}{17e-4}. The largest normalized errors can be observed for the smallest gradient strength, but no clear trend is evident for the remaining gradient strengths. Concerning the spatial dependence, the same observations apply which we just made. 

If we propagate the uncertainty of the calibration measurement, we have to expect errors in the range of \SIrange{5}{18}{\micro\tesla}, which increase with the gradient strength, as shown in Fig.~\ref{fig:Error}. These are similar in the $x$- and $y$-directions and stronger in the $z$-direction in the range of \SIrange{20}{70}{\percent} percent, increasing linearly with gradient strength. In direct comparison the observed error is up to a factor of \num{6} larger then the propagated one. Hence, the observed error can only partially attributed to uncertainties in the measurements with our Hall-effect sensor.

\begin{figure}
    \centering
    \includegraphics[width=0.7\textwidth]{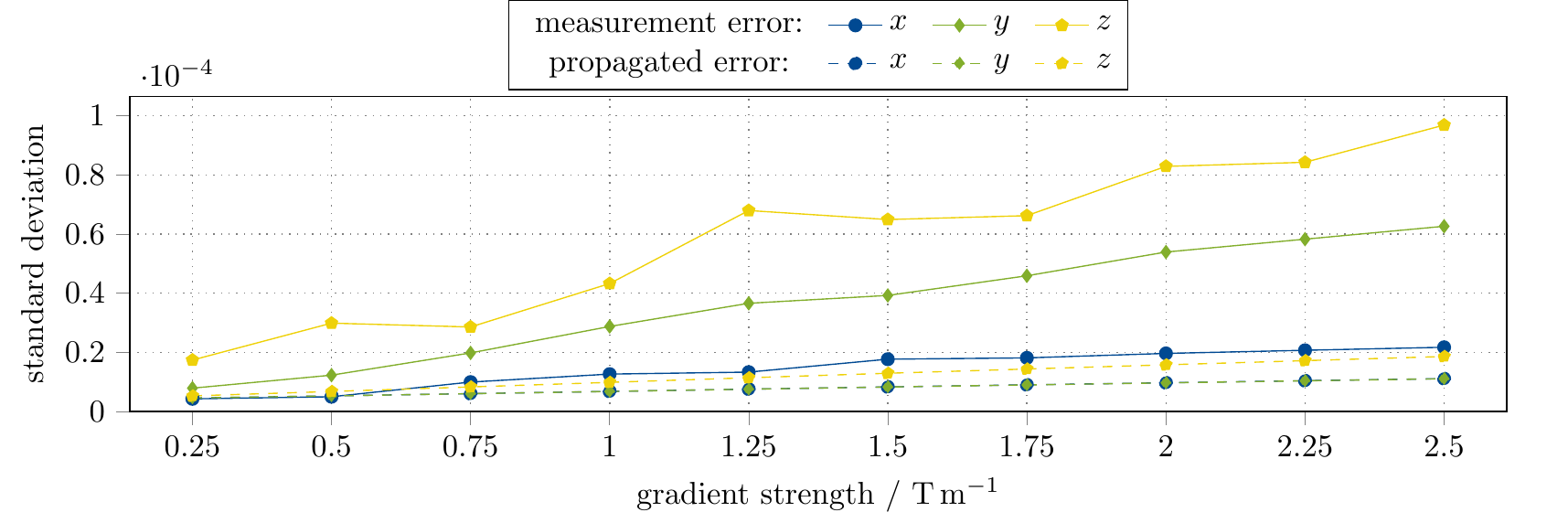}
    \caption{Standard deviation of the measurement errors (solid lines) and propagated errors (dashed lines).
    }
    \label{fig:Error}
\end{figure}

\section{Discussion}

In this paper we have given a review of the real solid harmonic expansions as a general solution of Laplace's equation and efficient quadrature methods for the calculation of the expansion coefficients via spherical t-designs. Furthermore, we proposed a method to change the reference point of the expansion using spatial shifts and thus arrive at a unique measurement setup independent expansion of the magnetic field. Moreover, we have shown how to analyze field imperfections using the polynomial structure of the expansion around the reference point. Our methods were evaluated on the signal generating and encoding fields in \acl{MPI}, where the coefficients provide a compact representation of the fields using the characteristic \acl{FFP} of the static selection field as unique expansion center. This uniqueness allows for comparison of the behavior of magnetic fields of different patches or \ac{MPI} scanner.

One of the main advantages of using these truncated expansions for the approximation of magnetic fields is the extremely fast acquisition time. Contrary to classical methods where the magnetic field is densely measured on the entire \ac{FOV}, fewer measurement on the surface of a sphere suffice to obtain the truncated expansion into real solid harmonics, regardless of the polynomial degree of the underlying field. In our setup measuring at the positions of a spherical $8$-design takes about \SI{2}{\minute}, which is sufficient to approximate the static and dynamic fields in MPI. In comparison, a Gauss-Legendre quadrature scheme, which has been commonly used in an MPI scenarios, has \SI{25}{\percent} more nodes and would take about \SI{2.5}{\minute}.

Our error analysis has shown that the deviations between model (truncated expansion) and measurement on the sphere are in the per mill range. However, only a small part of the observed deviations, \SI{20}{\percent} in the worst case scenario, can be attributed to the measurement inaccuracy of the Hall-effect sensor used. The error source with the greatest influence must therefore have a different origin. For example, a model error caused by the truncation of the expansion is possible. This hypothesis could be tested for example by choosing an expansion with larger $L$. For this, of course, a new spherical t-design with $t=2L$ would have to be chosen. However, for the application in the MPI context targeted in this work, the approximation accuracy of the real solid harmonic expansion up to degree $L=4$ achieved in this work is sufficient.

The coefficients allow for easy comparison of the different field setting of an \ac{MPI} system. Indeed, we observe slight imperfections in the shifted spatial encoding field and excitation field as shown in Fig.~\ref{fig:StaticFields} and Fig.~\ref{fig:DriveFieldsPatches}, respectively. These imperfections are a major cause for imaging artifacts and their precise knowledge is key in their reduction. First of all, the knowledge can be exploited for \ac{MPI} measurement planning. Shifting a patch to the correct position is crucial in many scenarios like multi-resolution data acquisition~\cite{gdaniec2017fast} or magnetic actuation~\cite{Rahmer2017}. Especially in multi-patch \ac{MPI}, the distorted shape and position of the shifted patches can lead to uncovered areas inside the \acl{FOV} as shown in Fig.~\ref{fig:MPISetup} or to imaging artifacts when the imperfections are not included in the patch-wise imaging operator~\cite{Szwargulski2018efficient}. The latter can be avoided by dedicated measurements of the operator of each patch for non-negligible field deviations~\cite{Boberg2020}, a field-dependent post-processing of the operator~\cite{Boberg2020IWMPI} or modeling of the imaging operator with integrated field imperfections~\cite{Albers2021}. Furthermore, spherical harmonic expansions can be directly incorporated in the reconstruction process~\cite{Bringout2020}. 

Medical imaging setups often feature cylindrical gantries, so an expansion of the magnetic fields with cylindrical harmonics would be a natural choice. Nevertheless, an approximation with cylindrical harmonic expansions requires considerably more coefficients, which complicates the analysis of the magnetic field imperfections. Furthermore, we observed that even more coefficients are required to obtain a similar accuracy as with a spherical harmonic expansion since the basis functions of the cylindrical harmonic expansions are not as suitable for the presented magnetic fields. To the best of our knowledge, such a small set of quadrature nodes as the spherical t-design does not exist for quadrature on a cylindrical surface.

The compact representation of the magnetic fields in \ac{MPI} offer multiple further investigations. Using the presented tools we can deal with the field's imperfections in various applications. First of all, they are an important parameter for model based reconstructions. Incorporating the real field parameter into the modeled imaging operator lead to reconstruction results closer to those obtained with a measured operator. Thus, with the provided tools, we can work with the imperfections of the magnetic fields instead of avoiding them at all costs. They can be accounted for in the imaging sequences or for magnetic actuation. Furthermore, the spherical t-design offers sufficient small set of measurement points such that multiple Hall-effect sensors can be used simultaneously to measure a magnetic field in one shot. This can be used for direct feedback for magnetic field calibrations.

\bibliographystyle{ieeetr}
\bibliography{MPI}

\newpage
\appendix
\section{Derivation of the Translation of the Coefficients}

\subsection{Addition Theorem for Normalized Real Solid Harmonics}\label{ssec:Ap_AdditionHarmonics}

The translation of the solid harmonic coefficients is based on the addition theorem for normalized real solid harmonics, which is adapted from the addition theorem for unnormalized real solid harmonics presented in~\cite{Rico2013}. Let $z_l^m$ denote the unnormalized real solid spherical harmonics defined in~\cite{Rico2013}. The mapping of the normalized solid harmonics $Z_l^m$ used in this paper to the unnormalized ones is given by
\begin{align*}
    z_l^m = (-1)^m \sqrt{ \frac{ (l+\abs{m})! }{ (l-\abs{m})! } } Z_l^m 
        \begin{cases}
            \frac{1}{\sqrt{2}}, & m \neq 0\\
            1, & m = 0.
        \end{cases}
\end{align*}
Applied to equations~(6) and (7) in~\cite{Rico2013}, this yields for $m \geq 0$
\begin{equation}\label{eq:Ap_AddtionHarmonicsPos}
\begin{alignedat}{3}
    \tau_{\bm v}\big(Z_l^m(\bm a)\big)
    =& \sum_{\lambda=0}^l\sum_{\mu=\max\set{0,\lambda-l+m}}^{\min\set{\lambda,m}} 
        &&\sigma^{(1)}_{l,m}(\lambda,\mu) 
        \left[Z_\lambda^\mu(\bm a) Z_{l-\lambda}^{m-\mu}(\bm v)
        - (1-\delta_{\mu 0})(1-\delta_{\mu m}) 
        Z_\lambda^{-\mu}(\bm a) Z_{l-\lambda}^{-(m-\mu)}(\bm v)\right]\\
    +&\sum_{\lambda=m+1}^{l-1}\sum_{\mu=m+1}^{\min\set{\lambda,-\lambda+l+m}} 
        &&\sigma^{(2)}_{l,m}(\lambda,\mu)
        \left[Z_\lambda^\mu(\bm a) Z_{l-\lambda}^{\mu-m}(\bm v)
        + Z_{\lambda}^{-\mu}(\bm a) Z_{l-\lambda}^{-(\mu-m)}(\bm v)\right]\\
    +&\sum_{\lambda=1}^{l-m-1}\sum_{\mu=\max\set{-\lambda,\lambda-l+m}}^{-1} 
        \negmedspace&&\sigma^{(3)}_{l,m}(\lambda,\mu)
        \left[Z_\lambda^{-\mu}(\bm a) Z_{l-\lambda}^{m-\mu}(\bm v)
        + Z_\lambda^{\mu}(\bm a) Z_{l-\lambda}^{-(m-\mu)}(\bm v)\right]
\end{alignedat}
\end{equation}
and for $m < 0$
\begin{equation}\label{eq:Ap_AddtionHarmonicsNeg}
\begin{alignedat}{3}
    \tau_{\bm v}\big(Z_l^m(\bm a)\big) 
    =&\sum_{\lambda=0}^l\sum_{\mu=\max\set{-\lambda,m}}^{\min\set{0,-\lambda+l+m}} 
        &&\sigma^{(1)}_{l,m}(\lambda,\mu)
        \negthinspace\left[(1\negthinspace-\negthinspace\delta_{\mu m})Z_\lambda^{-\mu}(\bm a) Z_{l-\lambda}^{-(\abs{m}+\mu)}(\bm v) +
        (1\negthinspace-\negthinspace\delta_{\mu 0})Z_\lambda^{\mu}(\bm a) Z_{l-\lambda}^{\abs{m}+\mu}(\bm v)\right]\\
    +&\sum_{\lambda=-m+1}^{l-1}\sum_{\mu=\max\set{-\lambda,\lambda-l+m}}^{m-1} 
        \hspace{-21pt}&&\sigma^{(2)}_{l,m}(\lambda,\mu)
        \negthinspace\left[-Z_\lambda^{-\mu}(\bm a) Z_{l-\lambda}^{\mu+\abs{m}}(\bm v) +
        Z_\lambda^{\mu}(\bm a) Z_{l-\lambda}^{-(\mu+\abs{m})}(\bm v)\right]\\
    +&\sum_{\lambda=1}^{l+m-1}\sum_{\mu=1}^{\min\set{\lambda,-\lambda+l+m}} 
        &&\sigma^{(3)}_{l,m}(\lambda,\mu)
        \negthinspace\left[Z_\lambda^\mu(\bm a) Z_{l-\lambda}^{-(\abs{m}+\mu)}(\bm v) -
        Z_\lambda^{-\mu}(\bm a) Z_{l-\lambda}^{\abs{m}+\mu}(\bm v)\right]
\end{alignedat}
\end{equation}

using the prefactors
\begin{align*}
    \sigma_{l,m}(\lambda,\mu) &=
    \sqrt{\frac{(l+m)!(l-m)!}{(\lambda+\mu)!(\lambda-\mu)!(l-\lambda+m-\mu)!(l-\lambda-m+\mu)!}}\\
    \sigma^{(1)}_{l,m}(\lambda,\mu) &=
    \sigma_{l,m}(\lambda,\mu) \begin{cases}
        \frac{1}{\sqrt{2}}, &\mu \neq 0 \land \mu \neq m \land m \neq 0\\
        1, & \textup{else}
    \end{cases}\\
    \sigma^{(2)}_{l,m}(\lambda,\mu) &= \sigma_{l,m}(\lambda,\mu) \frac{(-1)^{\mu-m}}{2}
    \begin{cases}
        \sqrt{2}, & m \neq 0\\
        1, & m = 0
    \end{cases}\\
    \sigma^{(3)}_{l,m}(\lambda,\mu) &= \sigma^{(2)}_{l,m}(\lambda,\mu) (-1)^{m}
\end{align*}

and the Kronecker delta
\begin{align*}
    \delta_{ij} = 
    \begin{cases} 
        1, & i = j\\ 
        0, & \text{else.}
    \end{cases}
\end{align*}
\subsection{Addition Theorem Transferred to the Solid Harmonic Coefficients}\label{ssec:Ap_AdditionCoefficients}

\begin{proof}[Proof of theorem~\ref{Th:Translation}]
    As preparation to apply the addition theorem for the solid harmonics from the previous section, we split the sum into three parts where $m>0$, $m<0$, and $m=0$ holds:
    \begin{alignat}{3}
        \tau_{\bm v}\big(\cS_L(\bgam(\brho))(\bm a)\big) 
        = &\sum_{l=0}^L \sum_{m=-l}^l &&\gamma_{l,m}(\brho) \,\tau_{\bm v} \big(Z_l^m(\bm a)\big)\nonumber\\
        = &\sum_{l=1}^L \sum_{m=1}^l &&\gamma_{l,m}(\brho) \,\tau_{\bm v} \big(Z_l^m(\bm a)\big)\label{B:m>0}\\
        + &\sum_{l=0}^L &&\gamma_{l,0}(\brho) \,\tau_{\bm v} \big(Z_l^m(\bm a)\big)\label{B:m=0}\\
        + &\sum_{l=1}^L \sum_{m=-l}^{-1} &&\gamma_{l,m}(\brho) \,\tau_{\bm v} \big(Z_l^m(\bm a)\big).\label{B:m<0}
    \end{alignat}
    For each of the parts, the following three steps are applied. 
    \begin{enumerate}[label=\roman*)]
        \item Applying the addition theorem from section~\ref{ssec:Ap_AdditionHarmonics} to $\tau_{\bm v}\big(Z_l^m(\bm a)\big)$.
        \item Individual rearrangement of each of the terms into a form 
        \begin{align}\label{eq:StructureSummands}
            \sum_{l,m} Z_l^m(\bm a) \sum_{\lambda,\mu} \kappa(l,m,\lambda,\mu).
        \end{align}
        \item Setting $\hat{\tau}_{\bm v}\!\left(\gamma_{l,m}(\brho)\right) := \sum_{\lambda,\mu}\kappa(l,m,\lambda,\mu)$, which finally yields theorem~\ref{Th:Translation}.
    \end{enumerate}
    
    \subsubsection*{Calculation of summand~\eqref{B:m>0}}
    In the first summand, it holds that $m>0$, which yields with \eqref{eq:Ap_AddtionHarmonicsPos}
    \begin{alignat}{3}
        \eqref{B:m>0} 
        = \sum_{l=1}^L \sum_{m=1}^l
        \Bigg[ &\sum_{\lambda=0}^l \sum_{\mu=\max\set{0,\lambda-l+m}}^{\min\set{\lambda,m}}
        \hspace{-10pt}&&\gamma_{l,m}(\brho)\,
        {\sigma}^{(1)}_{l,m}(\lambda,\mu) 
        \left(Z_{\lambda}^{\mu}(\bm a)Z_{l-\lambda}^{m-\mu}(\bm v)-(1-\delta_{\mu 0})(1-\delta_{\mu m})Z_{\lambda}^{-\mu}(\bm a)Z_{l-\lambda}^{-(m-\mu)}(\bm v)\right)\label{pos:1}\\
        +&\sum_{\lambda=m+1}^{l-1} \sum_{\mu=m+1}^{\min\set{\lambda,-\lambda+l+m}}
        \hspace{-12pt}&&\gamma_{l,m}(\brho)\,
        {\sigma}^{(2)}_{l,m}(\lambda,\mu) 
        \left(Z_{\lambda}^{\mu}(\bm a)Z_{l-\lambda}^{\mu-m}(\bm v)+Z_{\lambda}^{-\mu}(\bm a)Z_{l-\lambda}^{-(\mu-m)}(\bm v)\right)\label{pos:2}\\
        +&\sum_{\lambda=1}^{l-m-1} \sum_{\mu=\max\set{-\lambda,\lambda-l+m}}^{-1}
        \hspace{-25pt}&&\gamma_{l,m}(\brho)\,
        {\sigma}^{(3)}_{l,m}(\lambda,\mu) 
        \left(Z_{\lambda}^{-\mu}(\bm a)Z_{l-\lambda}^{m-\mu}(\bm v)+Z_{\lambda}^{\mu}(\bm a)Z_{l-\lambda}^{-(m-\mu)}(\bm v)\right)\Bigg].\label{pos:3}
    \end{alignat}
    Now, each summand is rearranged to obtain the structure from~\eqref{eq:StructureSummands}. For that purpose, the sums over $l$ and $\lambda$ and the sums over $m$ and $\mu$ have to be switched to factor out $Z_{\lambda}^{\mu}(\bm a)$. For the sake of simplicity, we omit the specific summand and indicate it with $\alpha$ and the important indices for the step.
    \begin{enumerate}
        \item We start with the transformation of summand~\eqref{pos:1}. The sums are swapped using the two reformulations
        \begin{align*}
            &\sum\limits_{l = 1}^L\sum\limits_{\lambda=0}^l \alpha_{l \lambda} 
            = \sum\limits_{\lambda = 1}^L\sum\limits_{l=\lambda}^L \alpha_{l \lambda} + \sum\limits_{l = 1}^L \alpha_{l 0},\\
            &\sum_{m = 1}^{l} \sum_{\mu=\max\set{0,\lambda-l+m}}^{\min\set{\lambda,m}} \alpha_{m \mu} 
            = \sum_{\mu=0}^\lambda \sum_{m = \max\set{1,\mu}}^{\mu-(\lambda-l)} \alpha_{m \mu}.
        \end{align*}
        Note that the second reformulation holds since $m>0$ in summand~\eqref{B:m>0}.
        Applying this to \eqref{pos:1} yields
        \begin{alignat*}{3}
            \eqref{pos:1} 
            = &&\sum_{\lambda=1}^L \sum_{\mu=1}^\lambda 
            &Z_\lambda^\mu(\bm a) \sum_{l=\lambda}^L \sum_{m = \mu}^{\mu-(\lambda-l)} \gamma_{l,m}(\brho)~ {\sigma}^{(1)}_{l,m}(\lambda,\mu)~Z_{l-\lambda}^{m-\mu}(\bm v)\\
            - &&\sum_{\lambda=1}^L \sum_{\mu=-\lambda}^{-1} 
            &Z_\lambda^\mu(\bm a) \sum_{l=\lambda}^L \sum_{m = -\mu+1}^{-\mu-(\lambda-l)} \gamma_{l,m}(\brho)~ {\sigma}^{(1)}_{l,m}(\lambda,-\mu)~Z_{l-\lambda}^{-(m+\mu)}(\bm v)\\
            && + \sum_{\lambda=1}^L 
            &Z_\lambda^0(\bm a)\sum_{l=\lambda}^L \sum_{m = 1}^{-(\lambda-l)} \gamma_{l,m}(\brho)~ {\sigma}^{(1)}_{l,m}(\lambda,0)~Z_{l-\lambda}^{m}(\bm v)\\
            && + &Z_0^0(\bm a) \sum_{l=1}^L \sum_{m = 1}^{l} \gamma_{l,m}(\brho)~ {\sigma}^{(1)}_{l,m}(0,0)~Z_{l}^{m}(\bm v),
        \end{alignat*}
        which has the structure of~\eqref{eq:StructureSummands} by relabeling the indices $\lambda$ with $l$ and $\mu$ with $m$. Note that the sign of $\mu$ is switched in the second summand in order to factor out $Z_{\lambda}^{\mu}$.
    
        \item Next, summand~\eqref{pos:2} is transformed by swapping the sums as
        \begin{align*}
            &\sum_{m=1}^{l}\sum_{\lambda=m+1}^{l-1} \alpha_{m \lambda}
            = \sum_{\lambda=1}^{l-1}\sum_{m=1}^{\lambda-1} \alpha_{m \lambda},\\
            &\sum_{l = 1}^L \sum_{\lambda=1}^{l-1} \alpha_{l \lambda}
            = \sum_{\lambda = 1}^L \sum_{l=\lambda+1}^{L} \alpha_{l \lambda},\\
            &\sum_{m=1}^{\lambda-1}\sum_{\mu=m+1}^{\min\set{\lambda,-\lambda+l+m}} \alpha_{m \mu}
            = \sum_{\mu=1}^\lambda \sum_{m=\max\set{1,\mu-(l-\lambda)}}^{\mu-1} \alpha_{m \mu}.
        \end{align*}
        Applying the swapped sums leads to
        \begin{alignat*}{2}
            \eqref{pos:2} = 
            &\sum_{\lambda=1}^L\sum_{\mu=1}^\lambda 
            && Z_\lambda^\mu(\bm a) \sum_{l=\lambda+1}^L\sum_{m=\max\set{1,\mu-(l-\lambda)}}^{\mu-1} \gamma_{l,m}(\brho)~ {\sigma}^{(2)}_{l,m}(\lambda,\mu)~Z_{l-\lambda}^{\mu-m}(\bm v)\\
            + &\sum_{\lambda=1}^L\sum_{\mu=-\lambda}^{-1} 
            &&Z_\lambda^\mu(\bm a) \sum_{l=\lambda+1}^L\sum_{m=\max\set{1,-\mu-(l-\lambda)}}^{-\mu-1} \gamma_{l,m}(\brho)~ {\sigma}^{(2)}_{l,m}(\lambda,-\mu)~Z_{l-\lambda}^{\mu+m}(\bm v),
        \end{alignat*}
        which again has the structure of~\eqref{eq:StructureSummands} by relabeling the indices $\lambda$ with $l$ and $\mu$ with $m$.
    
        \item Finally, summand~\eqref{pos:3} is transformed analogously. Switching the sums over $m$ and $\lambda$ by
        \begin{align*}
            \sum_{m=1}^{l}\sum_{\lambda=1}^{l-m-1} \alpha_{m \lambda}
            = \sum_{\lambda=1}^{l-1}\sum_{m=1}^{-\lambda+l-1} \alpha_{m \lambda},
        \end{align*}
        over $l$ and $\lambda$ as it is done for~\eqref{pos:2}, and over $m$ and $\mu$ by
        \begin{align*}
            \sum_{m=1}^{-\lambda+l-1}\sum_{\mu=\max\set{-\lambda,\lambda-l+m}}^{-1} \alpha_{m \mu}
            = \sum_{\mu=-\lambda}^{-1}\sum_{m=1}^{\mu-(\lambda-l)} \alpha_{m \mu}
        \end{align*}
        leads to the transformation
        \begin{equation*}\begin{alignedat}{3}
            \eqref{pos:3} = 
            &\sum_{\lambda=1}^L \sum_{\mu=1}^\lambda& &Z_l^\mu(\bm a)\sum_{l=\lambda+1}^L \sum_{m=1}^{-\mu-(\lambda-l)} \gamma_{l,m}(\brho)~ {\sigma}^{(3)}_{l,m}(\lambda,-\mu)~Z_{l-\lambda}^{\mu+m}(\bm v)\\
            + &\sum_{\lambda=1}^L \sum_{\mu=-\lambda}^{-1}& &Z_l^\mu(\bm a)\sum_{l=\lambda+1}^L \sum_{m=1}^{\mu-(\lambda-l)} \gamma_{l,m}(\brho)~ {\sigma}^{(3)}_{l,m}(\lambda,\mu)~Z_{l-\lambda}^{-(m-\mu)}(\bm v).
        \end{alignedat}
        \end{equation*}
    \end{enumerate}
    
    Altogether, with an relabeling of the indices $l$ with $\lambda$ and $m$ with $\mu$ this leads to 
    \begin{alignat*}{3}
        \eqref{B:m>0} &&= \eqref{pos:1}+\eqref{pos:2}+\eqref{pos:3}\\
            &&= \sum_{l=1}^L\sum_{m=1}^l Z_l^m(\bm a)& 
            \Bigg[\sum_{\lambda=l}^L\sum_{\mu=m}^{m-(l-\lambda)} \gamma_{\lambda,\mu}(\brho)~ 
            {\sigma}^{(1)}_{\lambda,\mu}(l,m)~Z_{\lambda-l}^{\mu-m}(\bm v)\\
            && &+ \sum_{\lambda=l+1}^L\sum_{\mu=\max\set{1,m-(\lambda-l)}}^{m-1} \gamma_{\lambda,\mu}(\brho)~ 
            {\sigma}^{(2)}_{\lambda,\mu}(l,m)~Z_{\lambda-l}^{m-\mu}(\bm v)\\
            && &+ \sum_{\lambda=l+1}^L\sum_{\mu=1}^{-m-(l-\lambda)} \gamma_{\lambda,\mu}(\brho)~ 
            {\sigma}^{(3)}_{\lambda,\mu}(l,-m)~Z_{\lambda-l}^{m+\mu}(\bm v)\Bigg]\\
            &&+ \sum_{l=1}^L\sum_{m=-l}^{-1} Z_l^m(\bm a)& 
            \Bigg[-\sum_{\lambda=l}^L\sum_{\mu=-m+1}^{-m-(l-\lambda)} \gamma_{\lambda,\mu}(\brho)~
            {\sigma}^{(1)}_{\lambda,\mu}(l,-m)~Z_{\lambda-l}^{-(\mu+m)}(\bm v)\\
            && &+ \sum_{\lambda=l+1}^L\sum_{\mu=\max\set{1,-m-(\lambda-l)}}^{-m-1} \gamma_{\lambda,\mu}(\brho)~
            {\sigma}^{(2)}_{\lambda,\mu}(l,-m)~Z_{\lambda-l}^{m+\mu}(\bm v)\\
            && &+ \sum_{\lambda=l+1}^L\sum_{\mu=1}^{m-(l-\lambda)}\gamma_{\lambda,\mu}(\brho)~ 
            {\sigma}^{(3)}_{\lambda,\mu}(l,m)~Z_{\lambda-l}^{-(\mu-m)}(\bm v)\Bigg]\\
            && +\sum_{l=1}^L Z_l^0(\bm a)& \Bigg[\sum_{\lambda=l}^L\sum_{\mu=1}^{-(l-\lambda)} \gamma_{\lambda,\mu}(\brho)~
            {\sigma}^{(1)}_{\lambda,\mu}(l,0)~Z_{\lambda-l}^{\mu}(\bm v)\Bigg]\\
            && +Z_0^0(\bm a)& \Bigg[\sum_{\lambda=1}^L\sum_{\mu=1}^{\lambda} \gamma_{\lambda,\mu}(\brho)~
            {\sigma}^{(1)}_{\lambda,\mu}(0,0)~Z_{\lambda}^{\mu}(\bm v)\Bigg].
    \end{alignat*}
    The parts contained in the square brackets are the first summands that define $\hat{\tau}_{\bm v}\gamma_{l,m}$ later on. In the following, the summands \eqref{B:m=0} and \eqref{B:m<0} are transformed analogously. 
    
    \subsubsection*{Calculation of summand~\eqref{B:m=0}}
    Since $m=0$ holds for the second summand, applying \eqref{eq:Ap_AddtionHarmonicsPos} leads to
    \begin{align}
        \eqref{B:m=0} =
        \sum_{l=0}^L
        \Bigg[&\sum_{\lambda=0}^l \gamma_{l,0}(\brho)~ \sigma^{(1)}_{l,0}(\lambda,0)~Z_{\lambda}^0(\bm a)Z_{l-\lambda}^0(\bm v)\label{null:1}\\
        +&\sum_{\lambda=1}^{l-1}\sum_{\mu=1}^{\min\set{\lambda,-\lambda+l}} \gamma_{l,0}(\brho)~
        \sigma^{(2)}_{l,0}(\lambda,\mu)~ \left(Z_{\lambda}^{\mu}(\bm a)Z_{l-\lambda}^{\mu}(\bm v)+Z_{\lambda}^{-\mu}(\bm a)Z_{l-\lambda}^{-\mu}(\bm v)\right)\label{null:2}\\
        +&\sum_{\lambda=1}^{l-1}\sum_{\mu=\max\set{-\lambda,\lambda-l}}^{-1} \gamma_{l,0}(\brho)~ 
        \sigma^{(3)}_{l,0}(\lambda,\mu)~\left(Z_{\lambda}^{-\mu}(\bm a)Z_{l-\lambda}^{-\mu}(\bm v)+Z_{\lambda}^{\mu}(\bm a)Z_{l-\lambda}^{\mu}(\bm v)\right)\Bigg].\label{null:3}
    \end{align}
    
    Again, the sums over $l$ and $\lambda$ and $m$ and $\mu$ are swapped to obtain the structure from~\eqref{eq:StructureSummands}. Since it is straightforward for~\eqref{null:1}, we directly start with summand~\eqref{null:2}.
    \begin{enumerate}
        \item Switching the sums is done by
        \begin{align*}
            &\sum_{l=0}^L\sum_{\lambda=1}^{l-1} \alpha_{l \lambda} 
            = \sum_{\lambda=1}^L\sum_{l=\lambda+1}^L \alpha_{l \lambda},\\
            &\sum_{l=\lambda+1}^L\sum_{\mu=1}^{\min\set{\lambda,-\lambda+l}} \alpha_{l \mu}
            = \sum_{\mu=1}^\lambda\sum_{l=\lambda+\mu}^L \alpha_{l \mu},
        \end{align*}
        which yields
        \begin{align*}
            \eqref{null:2} = 
            \sum_{\lambda=1}^L\sum_{\mu=1}^\lambda &Z_\lambda^\mu(\bm a) \sum_{l=\lambda+\mu}^L 
            \gamma_{l,0}(\brho)~ {\sigma}^{(2)}_{l,0}(\lambda,\mu)~Z_{l-\lambda}^\mu(\bm v)\\
            +\sum_{\lambda=1}^L\sum_{\mu=-\lambda}^{-1} &Z_\lambda^\mu(\bm a) \sum_{l=\lambda-\mu}^L \gamma_{l,0}(\brho)~ 
            {\sigma}^{(2)}_{l,0}(\lambda,-\mu)~Z_{l-\lambda}^\mu(\bm v).
        \end{align*}
    
        \item For the third summand~\eqref{null:3} we use the same transformation for $l$ and $\lambda$ and swap $l$ and $\mu$ by
        \begin{align*}
            \sum_{l=\lambda+1}^L\sum_{\mu=\max\set{-\lambda,\lambda-l}}^{-1} \alpha_{l \mu}
            = \sum_{\mu=-\lambda}^{-1}\sum_{l=\lambda-\mu}^L \alpha_{l \mu}.
        \end{align*}
        Combining the transformations, the third summand~\eqref{null:3} can be reformulated as
        \begin{align*}
            \eqref{null:3} = 
            \sum_{\lambda=1}^L\sum_{\mu=1}^\lambda &Z_\lambda^\mu(\bm a) \sum_{l=\lambda+\mu}^L \gamma_{l,0}(\brho)~ 
            {\sigma}^{(3)}_{l,0}(\lambda,-\mu)~Z_{l-\lambda}^\mu(\bm v)\\
            +\sum_{\lambda=1}^L\sum_{\mu=-\lambda}^{-1} &Z_\lambda^\mu(\bm a) \sum_{l=\lambda-\mu}^L \gamma_{l,0}(\brho)~ 
            {\sigma}^{(3)}_{l,0}(\lambda,\mu)~Z_{l-\lambda}^\mu(\bm v).
        \end{align*}
    \end{enumerate}
    
    Altogether by relabeling $l$ with $\lambda$ and $m$ with $\mu$, we get 
    \begin{alignat*}{3}
        \eqref{B:m=0}~ &&= \eqref{null:1} + \eqref{null:2} + \eqref{null:3}\\
            &&= \sum_{l=1}^L \sum_{m=1}^{l} Z_l^m(\bm a) 
            &\Bigg[ \sum_{\lambda=l+m}^L \gamma_{\lambda,0}(\brho)~ 
            {\sigma}^{(2)}_{\lambda,0}(l,m)~Z_{\lambda-l}^m(\bm v)\\
            && &+ \sum_{\lambda=l+m}^L \gamma_{\lambda,0}(\brho)~ 
            {\sigma}^{(3)}_{\lambda,0}(l,-m)~Z_{\lambda-l}^m(\bm v)\Bigg]\\
            &&+ \sum_{l=1}^L \sum_{m=-l}^{-1} Z_l^m(\bm a) 
            &\Bigg[ \sum_{\lambda=l-m}^L \gamma_{\lambda,0}(\brho)~ 
            {\sigma}^{(2)}_{\lambda,0}(l,-m)~Z_{\lambda-l}^m(\bm v)\\
            && &+ \sum_{\lambda=l-m}^L \gamma_{\lambda,0}(\brho)~ 
            {\sigma}^{(3)}_{\lambda,0}(l,m)~Z_{\lambda-l}^m(\bm v)\Bigg]\\
        &&+ \sum_{l=1}^L Z_l^0(\bm a) & \Bigg[\sum_{\lambda=l}^L \gamma_{\lambda,0}(\brho)~  {\sigma}^{(1)}_{\lambda,0}(l,0)~Z_{\lambda-l}^0(\bm v)\Bigg]\\
        && + Z_0^0(\bm a) & \Bigg[\sum_{\lambda=0}^L ~\gamma_{\lambda,0}(\brho)~ 
        {\sigma}^{(1)}_{\lambda,0}(0,0)~Z_{\lambda}^0(\bm v)\Bigg].
    \end{alignat*}
    
    \subsubsection*{Calculation of summand~\eqref{B:m<0}}
    Finally, \eqref{eq:Ap_AddtionHarmonicsNeg} is applied to the third summand where $m<0$ holds, which yields
    \begin{align}
        \eqref{B:m<0} = 
        \sum_{l=1}^L \sum_{m=-l}^{-1} 
        \Bigg[ &\sum_{\lambda=0}^l\sum_{\mu=\max\set{-\lambda,m}}^{\min\set{0,-\lambda+l+m}}
        \hspace{-20.6pt} \gamma_{l,m}(\brho)~ \sigma^{(1)}_{l,m}(\lambda,\mu)
        \negthinspace\left((1\negthinspace-\negthinspace\delta_{\mu 0})Z_{\lambda}^{\mu}(\bm a)Z_{l-\lambda}^{\abs{m}+\mu}(\bm v) 
        \negthinspace+\negthinspace (1\negthinspace-\negthinspace\delta_{\mu m})Z_{\lambda}^{-\mu}(\bm a)Z_{l-\lambda}^{-(\abs{m}+\mu)}(\bm v)\right)\label{neg:1}\\
        + &\sum_{\lambda=-m+1}^{l-1}\sum_{\mu = \max\set{-\lambda,\lambda-l+m}}^{m-1}
        \gamma_{l,m}(\brho)~ \sigma^{(2)}_{l,m}(\lambda,\mu)~\left(Z_{\lambda}^{\mu}(\bm a)Z_{l-\lambda}^{-(\mu+\abs{m})}(\bm v) - Z_{\lambda}^{-\mu}(\bm a)Z_{l-\lambda}^{\mu+\abs{m}}(\bm v)\right)\label{neg:2}\\
        + &\sum_{\lambda=1}^{l+m-1}\sum_{\mu=1}^{\min\set{\lambda,-\lambda+l+m}} 
        \gamma_{l,m}(\brho)~ \sigma^{(3)}_{l,m}(\lambda,\mu)~\left(Z_{\lambda}^{\mu}(\bm a)Z_{l-\lambda}^{-(\abs{m}+\mu)}(\bm v) - Z_{\lambda}^{-\mu}(\bm a)Z_{l-\lambda}^{\abs{m}+\mu}(\bm v)\right)\Bigg].\label{neg:3}
    \end{align}
    
    Now, each summand is transformed analogously to \eqref{B:m>0}. 
    \begin{enumerate}
        \item For the first summand~\eqref{neg:1} the sums over $l$ and $\lambda$ are swapped as it is done for \eqref{pos:1}. Together with
        \begin{align*}
            \sum_{m=-l}^{-1}\sum_{\mu=\max\set{-\lambda,m}}^{\min\set{0,-\lambda+l+m}} \alpha_{m \mu}
            = \sum_{\mu=-\lambda}^0\sum_{m=\mu-(l-\lambda)}^{\min\set{-1,\mu}} \alpha_{m \mu}
        \end{align*}
        this yields
        \begin{align*}
            \eqref{neg:1} = 
            \sum_{\lambda=1}^L\sum_{\mu=-\lambda}^{-1} &Z_\lambda^\mu(\bm a) \sum_{l=\lambda}^L\sum_{m=\mu-(l-\lambda)}^{\mu} \gamma_{l,m}(\brho)~  
            {\sigma}^{(1)}_{l,m}(\lambda,\mu)~Z_{l-\lambda}^{\abs{m}+\mu}(\bm v)\\
            + \sum_{\lambda=1}^L\sum_{\mu=1}^{\lambda} &Z_\lambda^\mu(\bm a) \sum_{l=\lambda}^L\sum_{m=-\mu-(l-\lambda)}^{-\mu-1} \gamma_{l,m}(\brho)~ 
            {\sigma}^{(1)}_{l,m}(\lambda,-\mu)~Z_{l-\lambda}^{-(\abs{m}-\mu)}(\bm v)\\
            + \sum_{\lambda=1}^L &Z_\lambda^0(\bm a) \sum_{l=\lambda}^L\sum_{m=-(l-\lambda)}^{-1} \gamma_{l,m}(\brho)~ 
            {\sigma}^{(1)}_{l,m}(\lambda,0)~Z_{l-\lambda}^{-\abs{m}}(\bm v)\\
            + &Z_0^0(\bm a) \sum_{l=1}^L\sum_{m=-l}^{-1} \gamma_{l,m}(\brho)~ 
            {\sigma}^{(1)}_{l,m}(0,0)~Z_{l}^{-\abs{m}}(\bm v).
        \end{align*}
    
        \item For the second summand~\eqref{neg:2}, the sums are swapped by
        \begin{align*}
            &\sum_{m=-l}^{-1}\sum_{\lambda=-m+1}^{l-1} \alpha_{m \lambda} 
            = \sum_{\lambda=1}^{l-1}\sum_{m=-\lambda+1}^{-1} \alpha_{m \lambda},\\
            &\sum_{m=-\lambda+1}^{-1}\sum_{\mu=\max\set{-\lambda,m}}^{m-1} \alpha_{m \mu} 
            = \sum_{\mu=-\lambda}^{-1}\sum_{m=\mu+1}^{\min\set{-1,\mu-(\lambda-l)}} \alpha_{m \mu},
        \end{align*}
        and the sums over $l$ and $\lambda$ are swapped in the same way as done for \eqref{pos:2}.
        Applying this to \eqref{neg:2} yields
        \begin{align*}
            \eqref{neg:2} = 
            \sum_{\lambda=1}^L\sum_{\mu=-\lambda}^{-1}&Z_\lambda^\mu(\bm a)\sum_{l=\lambda+1}^L\sum_{m=\mu+1}^{\min\set{-1,\mu-(\lambda-l)}}
            \gamma_{l,m}(\brho)~ 
            {\sigma}^{(2)}_{l,m}(\lambda,\mu)~Z_{l-\lambda}^{-(\mu+\abs{m})}(\bm v)\\
            - \sum_{\lambda=1}^L\sum_{\mu=1}^{\lambda}&Z_\lambda^\mu(\bm a)\sum_{l=\lambda+1}^L\sum_{m=-\mu+1}^{\min\set{-1,-\mu-(\lambda-l)}}
            \gamma_{l,m}(\brho)~  
            {\sigma}^{(2)}_{l,m}(\lambda,-\mu)~Z_{l-\lambda}^{\abs{m}-\mu}(\bm v).
        \end{align*}
    
        \item Finally, summand~\eqref{neg:3} is transformed. Using the transformations
        \begin{align*}
            &\sum_{m=-l}^{-1}\sum_{\lambda=1}^{l+m-1} \alpha_{m \lambda}
            = \sum_{\lambda=1}^{l-1}\sum_{m=\lambda-l+1}^{-1} \alpha_{m \lambda},\\
            &\sum_{m=\lambda-l+1}^{-1}\sum_{\mu=1}^{\min\set{\lambda,-\lambda+l+m}} \alpha_{m \mu}
            = \sum_{\mu=1}^\lambda\sum_{m=\mu-(l-\lambda)}^{-1} \alpha_{m \mu},
        \end{align*}
        and for $l$ and $\lambda$ the transformation as it was done for \eqref{pos:2}, yields
        \begin{alignat*}{3}
            \eqref{neg:3} = 
            &&\sum_{\lambda=1}^L\sum_{\mu=1}^\lambda &Z_\lambda^\mu(\bm a) \sum_{l=\lambda+1}^L\sum_{m=\mu-(l-\lambda)}^{-1} \gamma_{l,m}(\brho)~ 
            {\sigma}^{(3)}_{l,\abs{m}}(\lambda,\mu)~Z_{l-\lambda}^{-(\abs{m}+\mu)}(\bm v)\\
            - &&\sum_{\lambda=1}^L\sum_{\mu=-\lambda}^{-1} &Z_\lambda^\mu(\bm a) \sum_{l=\lambda+1}^L\sum_{m=-\mu-(l-\lambda)}^{-1} \gamma_{l,m}(\brho)~ 
            {\sigma}^{(3)}_{l,\abs{m}}(\lambda,-\mu)~Z_{l-\lambda}^{ \abs{m}-\mu}(\bm v).
        \end{alignat*}
    \end{enumerate}
    
    Finally by relabeling $l$ with $\lambda$ and $m$ with $\mu$ the third summand~\eqref{B:m<0} now reads
    \begin{align*}
        \eqref{B:m<0} = \eqref{neg:1} + \eqref{neg:2} + \eqref{neg:3}\\
            = \sum_{l=1}^L\sum_{m=1}^l Z_l^m(\bm a)& 
            \Bigg[\sum_{\lambda=l}^L\sum_{\mu=-m-(\lambda-l)}^{-m-1} \gamma_{\lambda,\mu}(\brho)~ 
            {\sigma}^{(1)}_{\lambda,\mu}(l,-m)~Z_{\lambda-l}^{-(\abs{\mu}-m)}(\bm v)\\
            &-\sum_{\lambda=l+1}^L\sum_{\mu=-m+1}^{\min\set{-1,-m-(l-\lambda)}}\gamma_{\lambda,\mu}(\brho)~ 
            {\sigma}^{(2)}_{\lambda,\mu}(l,-m)~Z_{\lambda-l}^{\abs{\mu}-m}(\bm v)\\
            &+\sum_{\lambda=l+1}^L\sum_{\mu=m-(\lambda-l)}^{-1} \gamma_{\lambda,\mu}(\brho)~ 
            {\sigma}^{(3)}_{\lambda,\mu}(l,m)~Z_{\lambda-l}^{-(\abs{\mu}+m)}(\bm v)\Bigg]\\
            +\sum_{l=1}^L\sum_{m=-l}^{-1} Z_l^m(\bm a)& 
            \Bigg[\sum_{\lambda=l}^L\sum_{\mu=m-(\lambda-l)}^{m} \gamma_{\lambda,\mu}(\brho)~ 
            {\sigma}^{(1)}_{\lambda,\mu}(l,m)~Z_{\lambda-l}^{\abs{\mu}+m}(\bm v)\\
            &+\sum_{\lambda=l+1}^L\sum_{\mu=m+1}^{\min\set{-1,m-(l-\lambda)}}\gamma_{\lambda,\mu}(\brho)~ 
            {\sigma}^{(2)}_{\lambda,\mu}(l,m)~Z_{\lambda-l}^{-(m+\abs{\mu})}(\bm v)\\
            &-\sum_{\lambda=l+1}^L\sum_{\mu=-m-(\lambda-l)}^{-1} \gamma_{\lambda,\mu}(\brho)~ 
            {\sigma}^{(3)}_{\lambda,\mu}(l,-m)~Z_{\lambda-l}^{\abs{\mu}-m}(\bm v)\Bigg]\\
        +\sum_{l=1}^L Z_l^0(\bm a)& \Bigg[\sum_{\lambda=l}^L\sum_{\mu=-(\lambda-l)}^{-1} \gamma_{\lambda,\mu}(\brho)~ 
        {\sigma}^{(1)}_{\lambda,\mu}(l,0)~Z_{\lambda-l}^{-\abs{\mu}}(\bm v)\Bigg]\\
        +Z_0^0(\bm a)& \Bigg[\sum_{\lambda=1}^L\sum_{\mu=-\lambda}^{-1} \gamma_{\lambda,\mu}(\brho)~ 
        {\sigma}^{(1)}_{\lambda,\mu}(0,0)~Z_{\lambda}^{-\abs{\mu}}(\bm v)\Bigg].
    \end{align*}

\subsection*{Translation of the coefficients}
Now, we have anything on hand to obtain the translated coefficients. Each summand of $\tau_{\bm v} \circ \cS_L(\bgam(\brho))$ is rearranged into a form $\sum_{l,m} Z_l^m \sum_{\lambda,\mu} \kappa(l,m,\lambda,\mu)$ so that we can put all parts together and define the translation of the coefficients as the sum over all corresponding $\kappa$.\\
The translation for $l\neq 0$ and $m>0$ is defined as
\begin{align}\label{eq:Coeffs_m>0}
    \begin{split}
        \hat{\tau}_{\bm v}\!\left(\gamma_{l,m}(\brho)\right) := 
            &\sum_{\lambda=l}^L\sum_{\mu=m}^{m-(l-\lambda)} \gamma_{\lambda,\mu}(\brho)~
            {\sigma}^{(1)}_{\lambda,\mu}(l,m)~Z_{\lambda-l}^{\mu-m}(\bm v)\\
            &+ \sum_{\lambda=l+1}^L\sum_{\mu=\max\set{1,m-(\lambda-l)}}^{m-1} \gamma_{\lambda,\mu}(\brho)~
            {\sigma}^{(2)}_{\lambda,\mu}(l,m)~Z_{\lambda-l}^{m-\mu}(\bm v)\\
            &+ \sum_{\lambda=l+1}^L\sum_{\mu=1}^{-m-(l-\lambda)}\gamma_{\lambda,\mu}(\brho)~
            {\sigma}^{(3)}_{\lambda,\mu}(l,-m)~Z_{\lambda-l}^{\mu+m}(\bm v)\\
        &+ \sum_{\lambda=l+m}^L \gamma_{\lambda,0}(\brho)~
        \left({\sigma}^{(2)}_{\lambda,0}(l,m)+{\sigma}^{(3)}_{\lambda,0}(l,-m)\right)~Z_{\lambda-l}^m(\bm v)\\
            &+ \sum_{\lambda=l}^L\sum_{\mu=-m-(\lambda-l)}^{-m-1} \gamma_{\lambda,\mu}(\brho)~
            {\sigma}^{(1)}_{\lambda,\mu}(l,-m)~Z_{\lambda-l}^{\mu+m}(\bm v)\\
            &- \sum_{\lambda=l+1}^L\sum_{\mu=-m+1}^{\min\set{-1,-m-(l-\lambda)}} \gamma_{\lambda,\mu}(\brho)~
            {\sigma}^{(2)}_{\lambda,\mu}(l,-m)~Z_{\lambda-l}^{-(\mu+m)}(\bm v)\\
            &+ \sum_{\lambda=l+1}^L\sum_{\mu=m-(\lambda-l)}^{-1} \gamma_{\lambda,\mu}(\brho)~
            {\sigma}^{(3)}_{\lambda,\mu}(l,m)~Z_{\lambda-l}^{\mu-m}(\bm v),
    \end{split}
\end{align}
for $l\neq 0$ and $m<0$ it is defined as
\begin{align}
\label{eq:Coeffs_m<0}
    \begin{split}
        \hat{\tau}_{\bm v}\big(\gamma_{l,m}(\brho)\big) := 
            &-\sum_{\lambda=l}^L\sum_{\mu=-m+1}^{-m-(l-\lambda)} \gamma_{\lambda,\mu}(\brho)~
            {\sigma}^{(1)}_{\lambda,\mu}(l,-m)~Z_{\lambda-l}^{-(\mu+m)}(\bm v)\\
            &+ \sum_{\lambda=l+1}^L\sum_{\mu=\max\set{1,-m-(\lambda-l)}}^{-m-1} \gamma_{\lambda,\mu}(\brho)~
            {\sigma}^{(2)}_{\lambda,\mu}(l,-m)~Z_{\lambda-l}^{\mu+m}(\bm v)\\
            &+ \sum_{\lambda=l+1}^L\sum_{\mu=1}^{m-(l-\lambda)} \gamma_{\lambda,\mu}(\brho)~
            {\sigma}^{(3)}_{\lambda,\mu}(l,m)~Z_{\lambda-l}^{-(\mu-m)}(\bm v)\\
        &+ \sum_{\lambda=l-m}^L \gamma_{\lambda,0}(\brho)~
        \left({\sigma}^{(2)}_{\lambda,0}(l,-m)+{\sigma}^{(3)}_{\lambda,0}(l,m)\right)~Z_{\lambda-l}^m(\bm v)\\ 
            &+ \sum_{\lambda=l}^L\sum_{\mu=m-(\lambda-l)}^{m} \gamma_{\lambda,\mu}(\brho)~
            {\sigma}^{(1)}_{\lambda,\mu}(l,m)~Z_{\lambda-l}^{m-\mu}(\bm v)\\
            &+ \sum_{\lambda=l+1}^L\sum_{\mu=m+1}^{\min\set{-1,m-(l-\lambda)}} \gamma_{\lambda,\mu}(\brho)~
            {\sigma}^{(2)}_{\lambda,\mu}(l,m)~Z_{\lambda-l}^{\mu-m}(\bm v)\\
            &- \sum_{\lambda=l+1}^L\sum_{\mu=-m-(\lambda-l)}^{-1} \gamma_{\lambda,\mu}(\brho)~
            {\sigma}^{(3)}_{\lambda,\mu}(l,-m)~Z_{\lambda-l}^{-(\mu+m)}(\bm v),
    \end{split}
\end{align}
for $l\neq 0$ and $m=0$ it is given by
\begin{align}\label{eq:Coeffs_m=0}
    \begin{split}
        \hat{\tau}_{\bm v}\big(\gamma_{l,0}(\brho)\big) := 
        &\sum_{\lambda=l}^L\sum_{\mu=1}^{\lambda-l} \gamma_{\lambda,\mu}(\brho)~
        {\sigma}^{(1)}_{\lambda,\mu}(l,0)~Z_{\lambda-l}^{\mu}(\bm v)\\
        &+ \sum_{\lambda=l}^L \gamma_{\lambda,0}(\brho)~
        {\sigma}^{(1)}_{\lambda,0}(l,0)~Z_{\lambda-l}^0(\bm v)\\
        &+ \sum_{\lambda=l}^L\sum_{\mu=-(\lambda-l)}^{-1} \gamma_{\lambda,\mu}(\brho)~
        {\sigma}^{(1)}_{\lambda,\mu}(l,0)~Z_{\lambda-l}^{\mu}(\bm v)\\
        = &\sum_{\lambda=l}^L\sum_{\mu=-(\lambda-l)}^{\lambda-l} \gamma_{\lambda,\mu}(\brho)~
        {\sigma}^{(1)}_{\lambda,\mu}(l,0)~Z_{\lambda-l}^{\mu}(\bm v),
    \end{split}
\end{align}
and finally for $l = m = 0$ it is defined as
\begin{align}\label{eq:Coeffs_l=0}
    \begin{split}
        \hat{\tau}_{\bm v}\big(\gamma_{0,0}(\brho)\big) := 
        &\sum_{\lambda=1}^L\sum_{\mu=1}^{\lambda} \gamma_{\lambda,\mu}(\brho)~
        {\sigma}^{(1)}_{\lambda,\mu}(0,0)~Z_{\lambda}^{\mu}(\bm v)\\
        &+ \sum_{\lambda=0}^L \gamma_{\lambda,0}(\brho)~ 
        {\sigma}^{(1)}_{\lambda,0}(0,0)~Z_{\lambda}^0(\bm v)\\
        &+ \sum_{\lambda=1}^L\sum_{\mu=-\lambda}^{-1} \gamma_{\lambda,\mu}(\brho)~
        {\sigma}^{(1)}_{\lambda,\mu}(0,0)~Z_{\lambda}^{\mu}(\bm v)\\
        = &\sum_{\lambda=0}^L\sum_{\mu=-\lambda}^{\lambda} \gamma_{\lambda,\mu}(\brho)~
        {\sigma}^{(1)}_{\lambda,\mu}(0,0) Z_{\lambda}^{\mu}(\bm v),
    \end{split}
\end{align}
which is equal to~\eqref{eq:Coeffs_m=0} with $l=0$.

With these definitions we finally obtain the operator $\signatur{\hat{\tau}_{\bm v}}{\IR^{(L+1)^2}}{\IR^{(L+1)^2}}$ such that
\begin{align*}
        \tau_{\bm v} \circ \mathcal{S}_L(\bgam(\brho)) 
        = \mathcal{S}_L \circ \hat{\tau}_{\bm v}\big(\bgam(\brho)\big).
    \end{align*}
\end{proof}

\end{document}